\newcommand{\matr}[1]{\boldsymbol{#1}}
\newcommand{\vect}[1]{\boldsymbol{#1}}
\newcommand{\eqdef}{\stackrel{def}{=}}
\newcommand{\E}[1]{\mathop{\mbox{$\mathbb{E}$}}\{#1\}} 
\newcommand{\I}{\mathbb I}
\newcommand{\RR}{\mathbb{R}}
\newtheorem{theorem}{Theorem}
\newtheorem{assumption}[theorem]{Assumption}
\newtheorem{proposition}[theorem]{Proposition}
\newtheorem{lemma}[theorem]{Lemma}
\newtheorem{remark}[theorem]{Remark}
\newenvironment{proof}{ \vspace{\smallskipamount}\par {\it Proof.}~}{\hfill $\Box$ \vspace{\medskipamount}\par }
\newcommand{\FIN}{\color{black}}
\begin{document}\sloppy

\begin{frontmatter}
\title{A Normality Test for Multivariate Dependent Samples}
\author{Sara ElBouch and Olivier Michel and Pierre Comon}
\affiliation{organization={Univ. Grenoble Alpes, CNRS, Grenoble INP, GIPSA-Lab},
            addressline={Grenoble Campus, BP.46}, 
            city={Grenoble},
            postcode={38000}, 
            country={France}}

\begin{abstract}

Most normality tests in the literature are performed for scalar and independent samples. Thus, they become unreliable when applied to colored processes,  hampering their use in realistic scenarios.
We focus on Mardia's multivariate kurtosis, derive closed-form expressions of its asymptotic distribution for statistically dependent samples, under the null hypothesis of normality and a mixing condition. The calculus is long and tedious but the final result is simple and is implemented with a low computational burden. The proposed expression of the test power exhibits good properties on various scenarios; this is illustrated by computer experiments by means of copulas. 
\end{abstract}


\begin{keyword}
Multivariate normality test \sep kurtosis \sep colored process \sep copula



\end{keyword}

\end{frontmatter}

\section{Introduction}\label{sec-intro}

The interest in techniques involving higher order statistics has grown considerably during the past decades \citep{NikiP93,Hayk00,CichA02,ComoJ10}. Actually, first and second order statistics allow an exhaustive characterization of Gaussian processes and linear systems.
Despite the practical importance of the Gaussian distribution, thanks to the central limit theorem, and the prevalence of linear dynamical systems  in small fluctuations models, many situations do not resort to these assumptions. As a consequence, detecting departure from Gaussianity arose as a means to detect and characterize non linear behavior, detection of changes in dynamical regimes \citep{BassN93}, etc. Higher-Order Statistics (HOS) were also shown to carry valuable information for blind identification problems, source separation and in measuring information theoretic quantities \citep{ComoJ10}, to name a few applications.  

 The present growth of interest in sensor networks and our ability to simultaneously record time series representing the fluctuations of numerous physical quantities, naturally leads to consider  $p$-dimensional processes. Surprisingly enough, normality tests for such $p$-dimensional stochastic processes were not so much investigated. To be more precise, very few results concern both the multivariate nature of the time series and the fact that the $p$-dimensional time samples cannot in general be considered as being i.i.d. This will be referred to as the \emph{non independent} identically distributed (n.i.d.)  property. The difficulty in testing the Gaussian nature of $p$-dimensional stochastic processes arises from the necessity for the test to tackle both the joint Gaussianity of the $p$ components and the time dependence of successive $p$-dimensional samples. To make this framework clear, the following notation is introduced: Let $\vect{x}(n)=[x_1(n),\dots,x_p(n)]^T$ be a real $p$-variate stochastic process, of which a sample of finite size, $N$, is observed. Hereafter, the stochastic processes under study  will be assumed stationary and zero-mean with covariance matrices for delay $\tau$ defined by 
\begin{equation}
    \matr{S}(\tau)=\E{\vect{x}(n)\vect{x}(n-\tau)^T}.
\end{equation}
Let $S_{ab}(\tau) $ denote the entries of matrix $\matr{S}(\tau)$, $(a,b) \in \{1,\ldots,p\}$. 
The n.i.d. nature of the time samples corresponds to have $S_{ab}(\tau) \neq 0$ in general for $\tau\neq0$. Thus, the process $X(n)$ enjoys both \emph{spatial} and \emph{temporal} dependancies (temporal refers to dependence w.r.t. $\tau$ and spatial  w.r.t. to  $(a,b) \in \{1,\ldots, p\}$). 
The normality test \emph{without alternative} can be formulated as follows:
\begin{quote}
\textbf{Problem P1:}
Given a finite sample  of size $N$, $\matr{X}\eqdef\{\vect{x}(1),\dots,\vect{x}(N)\}$:
\begin{equation}\label{eq-testGeneral}
    H: \matr{X} ~ \text{is Gaussian} \quad versus \quad \bar{H}
\end{equation}
where variables $\vect{x}(n)\in\RR^p$ are identically distributed, but not statistically independent.
\end{quote}
Solving this problem requires (i)~to define a test variable, and (ii)~to determine its asymptotic distribution (often itself normal) in order to assess the power of the test, that is, the probability to decide $H$ whereas $H$ is true.

For the scalar case ($p=1$), since the so-called Chi-squared test proposed by Fisher and improved in \citep{Moor71:as}, the most popular test is probably the omnibus test based on skewness and kurtosis  \citep{BowmS75:biom}.
The omnibus test combines estimated skewness and kurtosis weighted by the inverse of their respective asymptotic variance, evaluated under the assumption that the samples are Gaussian i.i.d.; see also \citep{Mard74:sank}, \citep{KotzJ82}. 
The asymptotic distribution of the test is $\chi_2^2$ when samples are i.i.d normal. However, as pointed out by \citep{Moor82:as}, the Chi-square test is very sensitive to the dependence between samples; the process color yields a loss in apparent normality \citep{Gass75:biom}. Actually, most of the tests proposed in the literature assume that observations are i.i.d., see \citep{ShapWC68:asaj} or \citep{PearAB77:biom}. This is also true for multivariate tests \citep{Mard70:biom,AndrGW73:ma3}; see the survey of \citep{Henz02:statp}. 
Only very few authors address the case of n.i.d samples, or so-called colored processes. One can mention Hinich's bispectrum-based linearity test \citep{Hini82:jtsa},  or Brillinger's trispectrum  \citep{Bril81}.  These multispectra (Fourier transform of 3rd and 4th order cumulant multicorrelations evaluated under stationarity assumption) induce in general an important computational load and have important estimation variance  even for large values of $N$. 
An appealing alternative was proposed in \citep{MoulCC92:ssap} where non linear transforms of the samples allow to go  beyond   monomials of degree 3 or 4. For instance, some tests are based on the characteristic function \citep{Epps87:as,MoulCC92:ssap} and others on entropy \citep{SteiZ92:ieeeIT}.
These tests remain however complex to implement in practice and may hardly be  executable in real time on a light processor when samples are colored (\emph{i.e.} statistically time dependant). 

\smallskip

\textbf{Contribution.}
Taking an opposite direction, the purpose of this paper is to propose a normality test that is simple to implement, even for colored (time correlated) $p$-dimensional processes, eventually at the expense of quite complicated and lengthy calculus to derive  the exact form of the test.  
For this reason, we shall focus on the multivariate kurtosis proposed by Mardia in \citep{Mard70:biom}, and derive its mean and variance when samples are assumed to be statistically n.i.d. 
Although the results presented apply easily in practice to a very wide class of applications, the process is assumed  to satisfy statistical \emph{mixing} properties. Within this framework,  a general procedure is proposed to compute the asymptotic mean and variance of Mardia's multivariate kurtosis when applied to \emph{colored} processes, and is shown to converge in $O(1/\sqrt{N})$,  thus fully characterizing the asymptotic normal distribution of the test statistic. 
The complete derivation is given for $p=2$, which allows to test joint normality of arbitrary 2-dimensional projections, thus generalizing the tests proposed in \citep{Mard70:biom,MalkFA73:jasa,NietCG14:csda} based on 1D projections and i.i.d samples. The benefits of using 2D is clear, as the resulting tests are subsequently shown to outperform 1D projection-based tests, via computer experiments. The importance of joint normality and the performance of our test is illustrated on \emph{n.i.d.} copulas, \emph{i.e.} with colored Gaussian marginals. Additionally, the particular case where $p$-dimensional observations are constructed by time embedding, thus mixing time and space dependencies, is developed.

\medskip

This article is organized as follows. Section \ref{sec-multiKurt} contains the definition of the test statistic.  
The tools necessary to conduct the calculations are introduced in Section \ref{sec-tools}.  
The moments involved in the derivation of both the mean and variance of the test statistic are given in Sections \ref{sec-genexp-mean}-\ref{sec-genexp-variance},  for arbitrary dimension $p$. 
Their expression in closed form for $p=1$ and $p=2$, and for the case where the multivariate process arises from a time embedding,  are given in Sections  \ref{sec-scalexps}-\ref{sec-embed-exps}.  
Section \ref{sec-comp-exp} reports some computer experiments.   The expressions of  moments and  details of calculation are deferred to  appendices in Section \ref{sec:appendix}.

\section{   Mardia's   Multivariate kurtosis}\label{sec-multiKurt}

The test proposed   by Mardia   in \citep{Mard70:biom} takes the form:
\begin{equation}\label{eq:popkurtosis}
    \beta_p = \E{(\vect{x}^{T}\matr{S}^{-1}\vect{x})^{2}}. 
\end{equation}
For $\vect{x}\sim\mathcal{N}_p(0,\matr{S})$, one can show that $\beta_p=p(p+2)$. Its sample counterpart for a sample of size $N$ is:
\begin{equation}\label{eq:samplekurtosis}
    B_p(N) = \frac{1}{N} \sum_{n=1}^{N}(\vect{x}(n)^T\matr{S}^{-1}\vect{x}(n))^2
\end{equation}
  It is worth noticing that $\matr S$ being the exact covariance matrix, all random realizations involved in the latter equation are standardized (remind that we assume zero-mean processes). Thus, the    advantage of this test variable is that it is invariant with respect to linear transformations, i.e., $\vect{y}=\matr{A}\vect{x}$. In practice, the covariance matrix $\matr{S}$ is unknown and is replaced by its sample estimate, $\hat{\matr{S}}$, so that we end up with the following test variable:
\begin{equation}\label{estimKurt-eq}
     \hat{B}_p(N) = \frac{1}{N} \sum_{n=1}^{N}\big(\vect{x}(n)^T\hat{\matr{S}}^{-1}\vect{x}(n)\big)^2
\end{equation}
with
$$
\hat{\matr{S}}=\frac{1}{N}\sum_{k=1}^{N}\vect{x}(k)\vect{x}(k)^{T}.
$$
The multivariate normality test can be formulated in terms of the multivariate Kurtosis: the variable $\vect{x}$ is  said to be normal if $|\hat{B}_p(N)-\E{\hat{B}_p(N)| H_0}|\le \eta$, where $\eta$ is a threshold to be determined as a function of the power of the test. 
The fact that $\hat{B}_p(N)$ is a good estimate of $\beta_p$ or not is relevant; what is important is to have a sufficiently accurate estimation of the power of the test. 
In order to do that, we need to assess the mean and variance of $\hat{B}_p(N)$ under $H$. 
Under the assumption that $\vect{x}(n)$ are i.i.d. realizations of variable $\vect{x}$, the mean and variance of $\hat{B}_p(N)$ have been calculated:

\begin{theorem}\citep{Mard70:biom}\label{mk}
Let $\{\vect{x}(n)\}_{1 \leq n \leq N}$ be \textit{i.i.d.} of dimension $p$. Then under the null hypothesis $H_0$, $\hat{B}_p(N)$ is asymptotically normal, with mean $p(p+2)\frac{N-1}{N+1}$ and variance $\frac{8p(p+2)}{N}+o(\frac{1}{N})$.
\end{theorem}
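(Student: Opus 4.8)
The plan is to exploit the affine invariance noted above: since $\vect{y}=\matr{A}\vect{x}$ leaves $\hat{B}_p(N)$ unchanged, I may assume without loss of generality that $\matr{S}=\matr{I}_p$, so that $\{\vect{x}(n)\}$ are i.i.d. $\mathcal{N}_p(0,\matr{I}_p)$. I would treat the mean exactly and the variance (together with asymptotic normality) by a linearization argument. For the mean, I would first establish the finite-sample law of the quadratic forms $d_n^2\eqdef\vect{x}(n)^T\hat{\matr{S}}^{-1}\vect{x}(n)$. Setting $\matr{W}=N\hat{\matr{S}}$ and peeling off the $n$-th term by the Sherman--Morrison identity shows that $d_n^2$ is a diagonal entry of the rank-$p$ orthogonal projector onto the column space of the data matrix; equivalently $d_n^2=(N-1)\,\beta_n$ with $\beta_n\sim\mathrm{Beta}(p/2,(N-1-p)/2)$, the effective $N-1$ degrees of freedom accounting for the estimation of the mean. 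Since $\hat{B}_p(N)=\frac{1}{N}\sum_n d_n^4=\frac{(N-1)^2}{N}\sum_n\beta_n^2$, the stated mean follows from the second moment $\E{\beta_n^2}=\frac{p(p+2)}{(N-1)(N+1)}$, which after summation gives exactly $\E{\hat{B}_p(N)}=p(p+2)\frac{N-1}{N+1}$.

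For the variance and the limit law I would view $\hat{B}_p(N)=\theta(\hat{F}_N)$ as a smooth functional of the empirical distribution, where $\theta(F)=\mathbb{E}_F\{(\vect{x}^T\matr{\Sigma}(F)^{-1}\vect{x})^2\}$ and $\matr{\Sigma}(F)=\mathbb{E}_F\{\vect{x}\vect{x}^T\}$, and apply the functional delta method. The Gâteaux derivative at $F=\mathcal{N}_p(0,\matr{I}_p)$ has two contributions: perturbing the outer averaging measure gives $\|\vect{x}\|^4-p(p+2)$, while perturbing the plug-in inverse covariance through $\frac{d}{d\epsilon}\matr{\Sigma}^{-1}=-(\vect{x}\vect{x}^T-\matr{I}_p)$ gives $-2(p+2)\|\vect{x}\|^2+2p(p+2)$, once the Gaussian expectation $\E{\|\vect{x}\|^2(\vect{x}^T\vect{x}_0)^2}=(p+2)\|\vect{x}_0\|^2$ is carried out. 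Collecting terms yields the influence function
\[
\psi(\vect{x})=\|\vect{x}\|^4-2(p+2)\|\vect{x}\|^2+p(p+2).
\]

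I would then conclude by the central limit theorem applied to the linearization $\sqrt{N}(\hat{B}_p(N)-p(p+2))=\frac{1}{\sqrt N}\sum_n\psi(\vect{x}(n))+o_p(1)$, whose limiting variance is $\mathrm{Var}(\psi(\vect{x}))=\E{\psi(\vect{x})^2}$. With $s=\|\vect{x}\|^2\sim\chi_p^2$ and the moments $\E{s^k}=p(p+2)\cdots(p+2k-2)$, a short expansion shows that the $s^4,s^3,s^2,s$ contributions cancel all but a constant, leaving $\E{\psi(\vect{x})^2}=8p(p+2)$, i.e. $\mathrm{Var}(\hat{B}_p(N))=\frac{8p(p+2)}{N}+o(1/N)$, which also displays the advertised $O(1/\sqrt N)$ convergence of the normalized statistic to its Gaussian limit.

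The main obstacle, and the step most easily mishandled, is precisely the second contribution to $\psi$: treating $\hat{\matr{S}}$ as if it were the fixed matrix $\matr{I}_p$ would give the larger variance $8p(p+2)(p+3)$, namely the variance of a raw $\frac{1}{N}\sum_n\|\vect{x}(n)\|^4$, and it is the fluctuation of the plug-in $\hat{\matr{S}}^{-1}$ that supplies the cross term $-2(p+2)\|\vect{x}\|^2$ cancelling the excess. Making this rigorous requires controlling the delta-method remainder, i.e. showing that the quadratic-and-higher terms in $\hat{\matr{S}}-\matr{I}_p$ contribute only $o_p(1/\sqrt N)$; the Gaussian tails furnish the needed uniform integrability, but this bookkeeping is the genuinely technical part. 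An alternative, fully exact but far more tedious route---the one effectively carried out by Mardia---computes the joint moments of the pairs $(d_n^2,d_m^2)$ from the joint Beta/projection law and expands the resulting double sum in powers of $1/N$.
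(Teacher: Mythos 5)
Your argument is essentially correct, but it takes a genuinely different route from the paper, which in fact offers no proof of this statement at all: Theorem~\ref{mk} is imported verbatim from Mardia (1970), and the closest the paper comes to a derivation is its own general machinery (Lemmas~\ref{lemma1}--\ref{lem:DLBp} and the moment bookkeeping of Sections~\ref{sec-genexp-mean}--\ref{sec-genexp-variance}), which, specialized to i.i.d.\ samples, recovers $\E{\hat{B}_1}\approx 3-6/N$ and $Var\{\hat{B}_1\}\approx 24/N$ in Section~\ref{sec-scalexps}. That machinery is a term-by-term Taylor expansion of $\hat{\matr{G}}=(\matr{S}+\matr{\Delta})^{-1}$ about $\matr{G}$ followed by explicit Gaussian moment reduction; your functional delta method with influence function $\psi(\vect{x})=\|\vect{x}\|^4-2(p+2)\|\vect{x}\|^2+p(p+2)$ is the coordinate-free version of the same linearization, and your identification of the plug-in term $-2(p+2)\|\vect{x}\|^2$ as the source of the reduction from $8p(p+2)(p+3)$ to $8p(p+2)$ is exactly the cancellation the paper's $-\matr{G}\matr{\Delta}\matr{G}$ term produces. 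Your exact Beta-distribution computation of the mean is cleaner and stronger than anything in the paper (it gives the finite-sample constant, not just the $O(1/N)$ expansion); what the paper's heavier route buys in exchange is that it survives the removal of independence, whereas your CLT-for-i.i.d.-$\psi(\vect{x}(n))$ step and your remainder control would both have to be redone under the mixing assumption.

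Two caveats. First, the delta-method remainder: you correctly flag that showing the quadratic-and-higher terms in $\hat{\matr{S}}-\matr{I}_p$ contribute $o_p(1/\sqrt{N})$ is the genuinely technical step, and as written it is asserted rather than carried out; the paper's Lemma~\ref{lemma1} ($\matr{\Delta}=O(1/\sqrt N)$ entrywise) plus the explicit second-order expansion of Lemma~\ref{lemma2} is precisely the bookkeeping that closes this. Second, a small mismatch of conventions: the paper's $\hat{\matr{S}}$ is \emph{not} mean-centered (the process is assumed zero-mean), under which the exact mean is $p(p+2)\frac{N}{N+2}$ rather than $p(p+2)\frac{N-1}{N+1}$; your Beta argument with $N-1$ degrees of freedom reproduces Mardia's centered version, which is what the theorem statement quotes. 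The two agree to $1-\frac{2}{N}+o(1/N)$, which is all the paper uses, but the discrepancy is worth a sentence if the proof is to accompany this particular definition of $\hat{\matr{S}}$.
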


  Note that the result above makes use of Landau notation $o(\frac{1}{N})$, to precise that the absolute approximation error is dominated by  $\frac{1}{N}$. Landau notation $O(h(N))$ will also be often used, when the absolute approximation error will be of the order of $h(N)$. Both will be extensively used in the rest of this paper.

Our purpose is now to state a similar theorem when $\{\vect{x}(n)\}$ are not independent. Since this involves heavy calculations, we need to introduce some tools to make them possible.

\section{Statistical and combinatorial tools}\label{sec-tools}
  In this section, partial useful results are established. Each is associated with a lemma, and represents a step towards the derivation of the exact expression of the statistics  of $\widehat{B}_p(N)$ defined in (\ref{estimKurt-eq}) for multivariate \emph{colored} processes: \\
\indent -Lemma \ref{lemma1}  proves that $\matr{\Delta} = \hat{\matr{S}}-\matr{S}$ varies as $O(1/\sqrt{N})$. \\
\indent -Lemma \ref{lemma2} uses the preceding result in order to express the sample precision matrix $\hat{G} = (\matr{S}+\matr{\Delta})^{-1}$ as a function of  the exact precision matrix $\matr{G}$ and of the approximation matrix $\matr{\Delta}$, up to order $O(\| \matr{\Delta} \|^3)$.\\
\indent -Finally, lemma \ref{lem:DLBp} allows to derive the approximate  expression of $\hat{B}_p(N)$ to order $N^{-1}$ in $o(1/N)$.

\smallskip

From now on, we assume the following condition 
upon $\vect{x}(n)$, necessary to relax the i.i.d. property while maintaining convergence of various terms: 
\begin{assumption}[Mixing]\label{mixing-ass}
$\sum_{\tau=0}^{\infty} |S_{ab}(\tau)|^2$  converges to a finite limit $\Omega_{ab}$, $\forall (a,b)\in\{1,\dots,p\}^2$, where $S_{ab}$ denote the entries of matrix $\matr{S}$. 
\end{assumption}


\subsection{Lemmas}
The estimated multivariate kurtosis (\ref{estimKurt-eq}) is a rational function of degree 4. Since we wish to calculate its asymptotic first and second order moments, when $N$ tends to infinity, we may expand this rational function about its mean. The first step is to expand the estimated covariance $\hat{\matr{S}}$. Let $\hat{\matr{S}}=\matr{S}+\matr{\Delta}$, where $\matr{\Delta}$ is small compared to $\matr{S}$; 
in fact : 
\begin{lemma}\label{lemma1}
The entries of matrix $\matr{\Delta}$ are of order $O(1/\sqrt{N})$. 
\end{lemma}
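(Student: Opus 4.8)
The plan is to read the $O(1/\sqrt{N})$ statement in the mean-square sense: I would show that each entry $\Delta_{ab}$ of $\matr{\Delta}$ has zero mean and a variance that decays like $O(1/N)$, so that its typical size, measured by its standard deviation, is $O(1/\sqrt{N})$. First I would record that $\Delta_{ab} = \frac{1}{N}\sum_{k=1}^N x_a(k)x_b(k) - S_{ab}(0)$, and since the process is stationary and zero-mean, $\E{x_a(k)x_b(k)} = S_{ab}(0)$ for every $k$, whence $\E{\Delta_{ab}} = 0$. The whole difficulty therefore reduces to controlling $\E{\Delta_{ab}^2}$.

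Next I would expand the second moment as a double sum over the sample indices, $\E{\Delta_{ab}^2} = \frac{1}{N^2}\sum_{k,l=1}^N \mathrm{Cov}\big(x_a(k)x_b(k),\, x_a(l)x_b(l)\big)$. Working under $H_0$, where $\vect{x}$ is Gaussian, I would evaluate the fourth-order moment by Isserlis' (Wick's) theorem, which yields $\mathrm{Cov}\big(x_a(k)x_b(k), x_a(l)x_b(l)\big) = S_{aa}(\tau)S_{bb}(\tau) + S_{ab}(\tau)S_{ba}(\tau)$ with $\tau = k-l$. Grouping the double sum by the lag $\tau$, each value $\tau \in \{-(N-1),\dots,N-1\}$ occurs $N-|\tau|$ times, so
\begin{equation}
\E{\Delta_{ab}^2} = \frac{1}{N^2}\sum_{\tau=-(N-1)}^{N-1}(N-|\tau|)\big[S_{aa}(\tau)S_{bb}(\tau) + S_{ab}(\tau)S_{ba}(\tau)\big].
\end{equation}

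The key step is then the elementary bound $\frac{N-|\tau|}{N^2}\le \frac{1}{N}$, which gives $\E{\Delta_{ab}^2}\le \frac{1}{N}\sum_{\tau}\big(|S_{aa}(\tau)S_{bb}(\tau)| + |S_{ab}(\tau)S_{ba}(\tau)|\big)$. Applying Cauchy--Schwarz to each product, for instance $\sum_{\tau}|S_{aa}(\tau)S_{bb}(\tau)|\le \big(\sum_\tau |S_{aa}(\tau)|^2\big)^{1/2}\big(\sum_\tau |S_{bb}(\tau)|^2\big)^{1/2}$, and invoking Assumption \ref{mixing-ass} (together with the symmetry $S_{ab}(-\tau)=S_{ba}(\tau)$, which lets me pass from the one-sided sums $\Omega_{ab}$ to sums over $\mathbb{Z}$), every such sum is finite and independent of $N$. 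Hence $\E{\Delta_{ab}^2}\le C_{ab}/N$ for some constant $C_{ab}$, so $\Delta_{ab}=O(1/\sqrt{N})$ in mean square, and Chebyshev's inequality upgrades this to $O(1/\sqrt{N})$ in probability.

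I expect the main obstacle to be the bookkeeping that turns the mixing hypothesis, stated only for one-sided sums of squared autocovariances $S_{ab}(\tau)$, into a uniform bound on the mixed sums $\sum_\tau|S_{aa}(\tau)S_{bb}(\tau)|$ and $\sum_\tau|S_{ab}(\tau)S_{ba}(\tau)|$ that actually arise. This is precisely where Cauchy--Schwarz is essential: it shows that the square-summability of each individual autocovariance sequence, guaranteed by Assumption \ref{mixing-ass}, already suffices to control the products, so that no stronger absolute-summability condition on $\matr{S}(\tau)$ is required.
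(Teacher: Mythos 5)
Your proof is correct and follows essentially the same route as the paper: a Wick/Isserlis expansion of the fourth-order moment under $H_0$, re-indexing the double sum by the lag $\tau$ with the $(N-|\tau|)$ multiplicity, bounding the triangular kernel by $1/N$, and invoking Assumption \ref{mixing-ass} to control the resulting sums of products of covariances (you use Cauchy--Schwarz where the paper uses $|uv|\le\frac12(u^2+v^2)$, which is an immaterial difference). The only substantive distinction is that the paper bounds the full cross-covariance $Cov(\Delta_{ab},\Delta_{cd})$ rather than just $\E{\Delta_{ab}^2}$, but for the statement of the lemma your variance bound suffices.
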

\begin{proof}\noindent Under Hypothesis $H$, the covariance of entries $\Delta_{ab}$ take the form below :
$$
Cov(\matr{\Delta}_{ab},\matr{\Delta}_{cd})= \frac{1}{N^{2}}\sum_{n=1}^N\sum_{m=1}^N \E{x_a(n) x_b(n) x_c(m) x_d(m)} - S_{ab} S_{cd} 
$$
and letting $\tau=n-m$, and $\Omega_{abcd}=S_{ac}S_{bd}+S_{ad}S_{bc}$ we have after some manipulation:
\begin{eqnarray*}
Cov(\matr{\Delta}_{ab},\matr{\Delta}_{cd}) &=& \frac{1}{N}\Omega_{abcd}
    +\frac{2}{N} \sum_{\tau=1}^{N-1}(1-\frac{\tau}{N})\left\{ S_{ac}(\tau)S_{bd}(\tau)+S_{ad}(\tau)S_{bc}(\tau) \right\}  \\
&\le& \frac{1}{N}\Omega_{abcd} + \frac{2}{N} \sum_\tau \left\{ |S_{ac}(\tau)|\,|S_{bd}(\tau)|+|S_{ad}(\tau)|\,|S_{bc}(\tau)| \right\}.
\end{eqnarray*}
Next, using the inequalities $|\sum_i u_i v_i|\le \sum_i |u_i| |v_i| \le \frac{1}{2}\sum_i (u_i^2 + v_i^2)$, we have:
$$
|Cov(\matr{\Delta}_{ab},\matr{\Delta}_{cd})| \le \frac{|\Omega_{abcd}|}{N} + \frac{1}{N} \sum_\tau |S_{ac}(\tau)|^2+|S_{bd}(\tau)|^2+|S_{ad}(\tau)|^2+|S_{bc}(\tau)|^2.
$$
Now using the mixing condition, $\sum_{\tau=0}^{\infty} |S_{ij}(\tau)|^2\le \Omega_{ij}$, we eventually obtain:
\begin{equation}
|Cov(\matr{\Delta}_{ab},\matr{\Delta}_{cd})| \le \frac{|\Omega_{abcd}|}{N} + \frac{1}{N} (\Omega_{ac}+\Omega_{bd} + \Omega_{ad}+\Omega_{bc})
\end{equation}
which shows that $Cov(\matr{\Delta}_{ab},\matr{\Delta}_{cd})=O(1/N)$.
\end{proof}


\begin{lemma}\label{lemma2}
The inverse $\hat{\matr{G}}$ of $\hat{\matr{S}}$ can be approximated by
\begin{equation}\label{eq:expansionG}
\hat{\matr{G}} = \matr{G} - \matr{G}\matr{\Delta}\matr{G} + 
\matr{G}\matr{\Delta}\matr{G}\matr{\Delta}\matr{G} + o(1/N). 
\end{equation}
\end{lemma}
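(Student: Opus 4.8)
The plan is to treat $\hat{\matr{G}}=(\matr{S}+\matr{\Delta})^{-1}$ as a perturbation of $\matr{G}=\matr{S}^{-1}$ and to expand it as a Neumann (matrix geometric) series. Since $\matr{S}$ is invertible, I would first factor out $\matr{S}$ by writing $\hat{\matr{S}}=\matr{S}+\matr{\Delta}=\matr{S}(\matr{I}+\matr{G}\matr{\Delta})$, so that
\begin{equation*}
\hat{\matr{G}}=(\matr{I}+\matr{G}\matr{\Delta})^{-1}\matr{G}.
\end{equation*}
The whole lemma then reduces to expanding the factor $(\matr{I}+\matr{G}\matr{\Delta})^{-1}$.

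Next I would invoke the Neumann series $(\matr{I}+\matr{M})^{-1}=\sum_{k\ge0}(-\matr{M})^{k}$, valid whenever $\|\matr{M}\|<1$, with $\matr{M}=\matr{G}\matr{\Delta}$. Multiplying the series by $\matr{G}$ on the right yields
\begin{equation*}
\hat{\matr{G}}=\matr{G}-\matr{G}\matr{\Delta}\matr{G}+\matr{G}\matr{\Delta}\matr{G}\matr{\Delta}\matr{G}+\sum_{k\ge3}(-\matr{G}\matr{\Delta})^{k}\matr{G},
\end{equation*}
whose first three terms already match the claimed expansion. It then remains to show that the tail, i.e. the sum over $k\ge3$, is $o(1/N)$.

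For the remainder I would argue in operator norm. Since $\matr{S}$ (hence $\matr{G}$) is deterministic with fixed, finite norm, the tail is bounded by $O(\|\matr{G}\matr{\Delta}\|^{3}\,\|\matr{G}\|)=O(\|\matr{\Delta}\|^{3})$, which is precisely the $O(\|\matr{\Delta}\|^{3})$ order announced just before the statement. Lemma \ref{lemma1} gives $\matr{\Delta}_{ab}=O(1/\sqrt{N})$, so that $\|\matr{\Delta}\|=O(1/\sqrt{N})$ and $\|\matr{\Delta}\|^{3}=O(N^{-3/2})=o(1/N)$, which closes the argument.

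The delicate point is that $\matr{\Delta}$ is random, so both the convergence condition $\|\matr{G}\matr{\Delta}\|<1$ and the Landau symbols must be read in a stochastic sense. The convergence holds only asymptotically: because $\matr{\Delta}\to0$ in mean square by Lemma \ref{lemma1}, the event $\{\|\matr{G}\matr{\Delta}\|<1\}$ has probability tending to one, and it is on this event that the geometric series is summed. Likewise, the statement $\|\matr{\Delta}\|^{3}=o(1/N)$ should be understood in probability (or after taking expectations). The main obstacle is therefore to convert the entrywise mean-square control of Lemma \ref{lemma1} into a uniform bound on the random operator-norm tail that remains genuinely $o(1/N)$ throughout the subsequent moment computations.
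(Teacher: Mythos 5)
Your proposal is correct and follows essentially the same route as the paper: a Neumann (geometric) series expansion of the perturbed inverse, truncated after the quadratic term, with the remainder controlled by $\|\matr{\Delta}\|^{3}=O(N^{-3/2})=o(1/N)$ via Lemma \ref{lemma1}. The only difference is cosmetic — you factor $\hat{\matr{S}}=\matr{S}(\matr{I}+\matr{G}\matr{\Delta})$ while the paper uses the symmetric factorization $\hat{\matr{S}}=\matr{S}^{1/2}(\matr{I}-\matr{E})\matr{S}^{1/2}$ — and your closing caveat about the stochastic reading of the convergence condition and the Landau symbols is a point the paper itself only handles informally.
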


\begin{proof}
  Notice that positive definite sample covariance matrix may be reexpressed as 
$$
\hat{\matr{S}} = \matr{S}+\matr{\Delta} = \matr{S}^{1/2}\matr{I}\matr{S}^{1/2}+\matr{S}^{1/2}\matr{S}^{-1/2}\matr{\Delta}\matr{S}^{-1/2}\matr{S}^{1/2}
$$
 
Let $\matr{E}$ be the symmetric  matrix $\matr{E} = -\matr{S}^{-1/2}\matr{\Delta}\matr{S}^{-1/2}$. Then with this definition, 
$$\hat{\matr{G}}=\matr{S}^{-1/2}(\matr{I}+\matr{E})^{-1}\matr{S}^{-1/2}$$ 
  As for   any matrix $\matr{E}$ with spectral radius smaller than 1, the series $\sum_{k=0}^\infty \matr{E}^k$ converges to $(\matr{I}-\matr{E})^{-1}$. 
If we plug this series in the expression of $\hat{\matr{G}}$, for $N$ large enough to warrant that the spectral radius of $\matr{E}$is less than 1,  we get 
$\hat{\matr{G}}=\matr{S}^{-1/2}\,\sum_{k=0}^K \matr{E}^k\,\matr{S}^{-1/2}+o(\|\matr{E}\|^K)$. Replacing $\matr{E}$  by its definition   and taking $K=3$   eventually yields (\ref{eq:expansionG}).   Note that the precise approximation order is $O(N^{-3/2})$, but only $o(1/N)$ will be useful in what follows.  
\end{proof}

Now it is desirable to express $\hat{\matr{G}}$ as a function of $\hat{\matr{S}}$. If we replace $\matr{\Delta}$ by $\hat{\matr{S}}-\matr{S}$ in (\ref{eq:expansionG}), we obtain:
\begin{equation}\label{eq:approxG}
\hat{\matr{G}}=3\matr{G} -3\matr{G}\hat{\matr{S}}\matr{G} +
\matr{G}\hat{\matr{S}}\matr{G}\hat{\matr{S}}\matr{G} + o(1/N). 
\end{equation}
With this approximation, $\hat{\matr{G}}$ is now a polynomial function of $\hat{\matr{S}}$ of degree 2, and hence of degree 4 in $\vect{x}$. We shall show that the mean of $\hat{B}_p(N)$ involves moments of $\vect{x}$ up to order 8, whereas its variance involves moments up to order 16.

\begin{lemma}\label{lem:DLBp}
Denote $A_{ij}=\vect{x}(i)^T\matr{S}^{-1}\vect{x}(j)$. Then:
\begin{equation}\label{eq:DLBp}
\begin{split}
     \hat{B}_p(N)& = \frac{6}{N}\sum_{n=1}^{N} A_{nn}^2 -\frac{8}{N^2}\sum_{n=1}^{N}A_{nn}\sum_{i=1}^{N}A_{ni}^2 + \frac{1}{N^3} \sum_{n=1}(\sum_{i=1}^{N}A_{ni}^2)(\sum_{j=1}^{N}A_{nj}^2)\\
     &
     +\frac{2}{N^3}\sum_{n=1}^{N}\sum_{j=1}^{N}\sum_{k=1}^{N} A_{nn}A_{nj}A_{jk}A_{kn} + o(1/N)
     \end{split}
\end{equation}
\end{lemma}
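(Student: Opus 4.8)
The plan is to substitute the expansion of $\hat{\matr{G}}$ from Lemma~\ref{lemma2} into the scalar $q_n\eqdef\vect{x}(n)^T\hat{\matr{G}}\vect{x}(n)$, so that $\hat{B}_p(N)=\frac1N\sum_{n=1}^N q_n^2$, and then to expand the square while retaining every contribution that does not vanish faster than $1/N$ once averaged. Recalling $\matr{G}=\matr{S}^{-1}$ and $A_{ij}=\vect{x}(i)^T\matr{G}\vect{x}(j)$, equation~(\ref{eq:expansionG}) gives
\[
q_n = A_{nn} + a_n + b_n + o(1/N),\quad a_n=-\vect{x}(n)^T\matr{G}\matr{\Delta}\matr{G}\vect{x}(n),\quad b_n=\vect{x}(n)^T\matr{G}\matr{\Delta}\matr{G}\matr{\Delta}\matr{G}\vect{x}(n).
\]
By Lemma~\ref{lemma1} the entries of $\matr{\Delta}$ are $O(1/\sqrt{N})$, hence $a_n=O(1/\sqrt{N})$ and $b_n=O(1/N)$; these two orders are exactly what later dictates which products to keep.

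First I would make $a_n$ and $b_n$ explicit. Inserting $\matr{\Delta}=\hat{\matr{S}}-\matr{S}=\frac1N\sum_k\vect{x}(k)\vect{x}(k)^T-\matr{S}$ and using repeatedly the identity $\matr{G}\matr{S}\matr{G}=\matr{G}$ — which collapses every factor $\matr{S}$ sitting between two copies of $\matr{G}$ — turns each matrix product into sums of the scalars $A_{ij}$. Using also $A_{nk}=A_{kn}$ (symmetry of $\matr{G}$), this gives
\[
a_n = A_{nn}-\frac1N\sum_{k}A_{nk}^2,\qquad b_n = A_{nn}-\frac2N\sum_{k}A_{nk}^2+\frac1{N^2}\sum_{j}\sum_{k}A_{nj}A_{jk}A_{kn}.
\]

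Next I would square $q_n$. Since $a_n=O(1/\sqrt{N})$ and $b_n=O(1/N)$, the products $a_nb_n$ and $b_n^2$ are $O(N^{-3/2})$ and $O(N^{-2})$ per index and therefore $o(1/N)$ after the averaging $\frac1N\sum_n$; they are discarded, leaving
\[
q_n^2 = A_{nn}^2 + 2A_{nn}a_n + a_n^2 + 2A_{nn}b_n + o(1/N).
\]
Substituting the explicit forms and collecting monomials produces the coefficient $1+2+1+2=6$ in front of $A_{nn}^2$ and $-2-2-4=-8$ in front of $\frac1N A_{nn}\sum_k A_{nk}^2$, while $a_n^2$ supplies the term $\frac1{N^2}(\sum_k A_{nk}^2)^2$ and $2A_{nn}b_n$ the term $\frac2{N^2}A_{nn}\sum_{j,k}A_{nj}A_{jk}A_{kn}$. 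Multiplying by $\frac1N\sum_n$ then reproduces (\ref{eq:DLBp}) verbatim.

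I expect the only genuine difficulty to be the order bookkeeping rather than the algebra. Although $a_n=O(1/\sqrt{N})$ and $b_n=O(1/N)$, each is a difference of quantities that are individually $O(1)$, so one must carry their \emph{exact} combinatorial expressions through the squaring; the orders are used only to certify that the cross products $a_nb_n$ and $b_n^2$ average to $o(1/N)$ and may be dropped. In particular one must resist collapsing $\frac1N\sum_k A_{nk}^2$ to its limit $A_{nn}$, since it is precisely the $O(1/\sqrt{N})$ and $O(1/N)$ fluctuations retained in (\ref{eq:DLBp}) that will feed the mean and variance in the subsequent sections. Finally, turning these per-index orders into valid statements about the averages $\frac1N\sum_n(\cdot)$ is where the mixing property of Assumption~\ref{mixing-ass} ultimately enters, preventing the sums over $n$ of the off-diagonal scalars $A_{nk}$, $k\neq n$, from accumulating beyond the claimed orders.
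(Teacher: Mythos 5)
Your proposal is correct and follows essentially the same route as the paper: substitute the Neumann expansion of $\hat{\matr{G}}$ from Lemma~\ref{lemma2} into $\hat{B}_p(N)$, replace $\matr{\Delta}$ by $\hat{\matr{S}}-\matr{S}$ using $\matr{G}\matr{S}\matr{G}=\matr{G}$, expand the square, and discard the cross terms of order $o(\|\matr{\Delta}\|^2)$; your coefficient bookkeeping ($1+2+1+2=6$ and $-2-2-4=-8$) matches the paper's intermediate expression exactly. The only difference is cosmetic --- you substitute $\matr{\Delta}=\hat{\matr{S}}-\matr{S}$ inside $a_n$ and $b_n$ before squaring, whereas the paper squares first and substitutes afterwards.
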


\begin{proof}
First inject (\ref{eq:expansionG}) in the expression  $\hat{B}_p(N)=\frac{1}{N}\sum_{n=1}^N \left(\vect{x}(n)^T\hat{\matr{G}}\vect{x}(n)\right)^2$, and  keep terms up to order $O(\|\matr{\Delta}\|^2)$; this yields:
\begin{eqnarray*}
\hat{B}_p(N)=\frac{1}{N} \sum_n \left[A_{nn}^2   
 -2A_{nn}\,\vect{x}(n)^T\matr{G}\matr{\Delta}\matr{G}\vect{x}(n)
 +\big(\vect{x}(n)^T\matr{G}\matr{\Delta}\matr{G}\vect{x}(n)\big)^2 \right.\\
 \left.  + 2A_{nn}\,\vect{x}(n)^T\matr{G}\matr{\Delta}\matr{G}\matr{\Delta}\matr{G}\vect{x}(n)\right]
 +o(\|\matr{\Delta}\|^2).
\end{eqnarray*}
Then replace $\matr{\Delta}$ by $\hat{\matr{S}}-\matr{S}$. This leads to 
\begin{eqnarray*}
\hat{B}_p(N)=\frac{1}{N} \sum_n \left[
6A_{nn}^2 -8 A_{nn}\big(\vect{x}(n)^T\matr{G}\hat{\matr{S}}\matr{G}\vect{x}(n)\big) \right.\\
\left. +\big(\vect{x}(n)^T\matr{G}\hat{\matr{S}}\matr{G}\vect{x}(n)\big)^2 
+ 2 A_{nn} \big(\vect{x}(n)^T\matr{G}\hat{\matr{S}}\matr{G}\hat{\matr{S}}\matr{G}\vect{x}(n)\big)
\right] +o(\|\matr{\Delta}\|^2).
\end{eqnarray*}
Equation (\ref{eq:DLBp}) is eventually obtained after replacing $\hat{\matr{S}}$ by $\frac{1}{N}\sum_k\vect{x}(k)\vect{x}(k)^T$ and all terms of the form $\vect{x}(q)^T\matr{G}\vect{x}(r)$ by $A_{qr}$. 
\end{proof}

\subsection{Additional notations and   calculus   issues}

When computing the mean and variance of $\hat{B}_p(N)$ given in (\ref{eq:DLBp}), higher order moments of the multivariate random variable $\vect{x}$ will arise. Under the normal (null) hypothesis, these moments are expressed as functions of second order moments only. To keep notations reasonably concise, it is proposed to use McCullagh's bracket notation \citep{Mccu87}, briefly reminded in Appendix \ref{Appendix-A1}. Furthermore, for all moments of order higher than $p$, some components appear multiple times; counting the number of identical terms in the expansion of the higher moments is a tedious task. All the moment expansions  that are necessary for the derivations presented in this paper are developed in Appendix \ref{app:moments}.

In order to keep notations as explicit and concise as possible, while keeping explicit the role of both coordinate (or space) indices and time indices,  let 
the  moments of $\vect{x}(t)$, whose $p$ components are $x_a(t)$, $1\le a\le p$ be noted
\begin{equation}\label{eq:mu}
\mu_{ab}^{tu}=\E{x_a(t)x_b(u)}, \quad
\mu_{abc}^{tuv}=\E{x_a(t)x_b(u)x_c(v)}
\end{equation}
and so forth for higher orders. It shall be emphasized that different time and coordinate indices appear here as the components are assumed to be colored (time correlated) and dependent  to each others (spatially correlated). 

Computation of the mean and variance  of $\hat B_p$ defined by equation (\ref{eq:DLBp}) involves the computation of moments of order  noted $2L$  whose generic expression is 

\begin{equation*}
\E{\prod_{l=1}^L A_{\alpha_l\beta_l}}=\sum_{r_1\ldots r_L,c_1\ldots c_L =1}^p \left(\prod_{i=1}^{L}G_{r_i,c_i} \right) \E{ x_{r_1}({\alpha_1})x_{c_1}({\beta_1}) \ldots x_{r_L}(\alpha_L)x_{c_L}(\beta_L)}
\end{equation*}

or equivalently

\begin{equation}\label{eq:Annex_1}
\E{\prod_{l=1}^L A_{\alpha^l\beta^l}}=\sum_{r_1\ldots r_L,c_1\ldots c_L =1}^p \left(\prod_{i=1}^{L}G_{r_i,c_i} \right) 
\mu_{r_1\ldots r_L c_1\ldots c_L }^{\alpha_1\ldots\alpha_L\beta_1\ldots\beta_L}
\end{equation}

In the above equation,  the $2L$-order moment $\mu_{r_1\ldots r_L c_1\ldots c_L }^{\alpha_1\ldots\alpha_L\beta_1\ldots\beta_L}$ has superscripts indicating the time indices involved, whereas the subscripts indicate the coordinate (or space) indices. 

While being general, the above formulation may take simpler, or more explicit forms in practice. 
The  detailed methodology for computing the expressions of the mean and variance of  $\hat B_p$ as functions of second order moments is deferred to Appendix \ref{Appendix-A2}. The resulting expressions of Mardia's statistics  are given and discussed in the sections to come. 

\section{Expression of the mean of   $\widehat{B}_p(N)$   }\label{sec-genexp-mean}
According to Equation (\ref{eq:DLBp}), we have four types of terms. The goal of this section is to provide the expectation of each of these terms.   In the propositions below, all terms are developed as being sums and products of second order moments, as it is reminded that under $H$ the process is Gaussian. 
Notice also that under the latter assumption, all higher-order moments of any order are finite. For sake of simplicity,  Landau's approximation order  $O(h(n))$ is omitted in most equations.  

\begin{lemma}\label{lem:meanAA}
With the definition of $A_{ij}$ given in Lemma \ref{lem:DLBp}, we have:
\begin{eqnarray}
\E{A_{nn}^2} &=& \sum_{a,b,c,d=1}^{p} G_{ab}G_{cd} \,\mu^{nnnn}_{abcd}
\\ 
\E{A_{nn} A_{ni}^2} &=& \sum_{a,b,c,d=1}^{p}\,\sum_{e,f=1}^{p}G_{ab}G_{cd}G_{ef} \,\mu_{abcedf}^{nnnnii}
\\
\E{A_{ni}^2A_{nj}^2} &=& \sum_{a,b,c,d=1}^{p}\,\sum_{e,f,g,h=1}^{p} G_{ab}G_{cd}G_{ef}G_{gh}\,\mu_{acegbdfh}^{nnnniijj} 
\\
\E{A_{nn}A_{nj}A_{jk}A_{kn}}&=&\sum_{a,b,c,d=1}^{p}\,\sum_{e,f,g,h=1}^{p} G_{ab}G_{cd}G_{ef}G_{gh}\,\mu_{abchdefg}^{nnnnjjkk}    
\end{eqnarray}
\end{lemma}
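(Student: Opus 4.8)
The plan is to treat each of the four identities as a direct instantiation of the general expansion formula (\ref{eq:Annex_1}), since every term on the left-hand side is a product of bilinear forms $A_{\alpha\beta}=\vect{x}(\alpha)^T\matr{S}^{-1}\vect{x}(\beta)$ with precision matrix $\matr{G}=\matr{S}^{-1}$. No Gaussianity, mixing, or asymptotics are invoked at this stage: the identities are exact and the only work is careful index bookkeeping.

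First I would write each factor explicitly in coordinates, $A_{\alpha\beta}=\sum_{r,c=1}^p G_{rc}\,x_r(\alpha)x_c(\beta)$, assigning a \emph{fresh} pair of summation indices to every factor; for a product of $L$ factors this introduces $L$ distinct entries $G_{rc}$ and $2L$ coordinate indices. Because the $G_{rc}$ are deterministic, linearity of expectation lets me pull all $L$ of them outside the expectation, leaving a single expectation of a product of $2L$ scalar components $x_r(\alpha)$, i.e. a $2L$-th order moment of the entries of $\vect{x}$. Thus $L=2$ for $\E{A_{nn}^2}$ (two factors $G$, four indices $a,b,c,d$), $L=3$ for $\E{A_{nn}A_{ni}^2}$, and $L=4$ for the two quartic terms.

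Next I would rewrite that scalar moment in the $\mu$-notation of (\ref{eq:mu}). Since the expectation is invariant under permuting its factors, I am free to reorder the components; the convention is to list consecutively all components sharing a given time argument, with the time arguments grouped as superscripts and the matching coordinate indices as subscripts in the same positional order. Carrying this out factor by factor produces exactly the stated $\mu$-symbols. For instance, in $\E{A_{nn}A_{ni}^2}$ the three factors contribute $x_a(n)x_b(n)$, $x_c(n)x_d(i)$, $x_e(n)x_f(i)$, and regrouping the four time-$n$ entries $(a,b,c,e)$ ahead of the two time-$i$ entries $(d,f)$ yields $\mu_{abcedf}^{nnnnii}$; the quartic cases $\E{A_{ni}^2A_{nj}^2}$ and $\E{A_{nn}A_{nj}A_{jk}A_{kn}}$ follow identically, grouping the eight components by their time labels $n,i,j$ (resp. $n,j,k$).

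The only real obstacle is notational rather than mathematical: one must choose consistent summation indices for each factor and then permute the coordinate subscripts into the ordering dictated by the grouping-by-time convention, taking care that every coordinate index remains paired with its own time argument throughout. Once this is handled, each identity is merely a restatement of the defining sums. The further reduction of the resulting $2L$-th order $\mu$-symbols to products of second-order moments under $H$, which uses Gaussianity, is deferred to the moment expansions of the appendix and the subsequent propositions.
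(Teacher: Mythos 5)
Your proposal is correct and follows exactly the route the paper takes: each identity is a direct instantiation of the general expansion formula (\ref{eq:Annex_1}), obtained by writing each factor as $A_{\alpha\beta}=\sum_{r,c}G_{rc}x_r(\alpha)x_c(\beta)$ with fresh indices, pulling the deterministic $G_{rc}$ outside the expectation, and regrouping the coordinate subscripts by their time superscripts (the paper carries out the same bookkeeping in its Appendix \ref{Appendix-A2} example for $\E{A_{nn}A_{nj}A_{jk}A_{kn}}$). Your index assignments reproduce the paper's $\mu$-symbols exactly, and you correctly note that Gaussianity plays no role at this stage.
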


\begin{proposition}\label{mean-prop}
Using expressions of moments given in Appendix \ref{app:moments}, the expectations of the four terms defined in Lemma \ref{lem:meanAA} take the form below
\begin{eqnarray}
\E{A_{nn}^2} &=& \displaystyle\sum_{k\ell qr=1}^{p} G_{k\ell}G_{rq} \Big\{ [3]\mu^{nn}_{k\ell}\mu^{nn}_{qr}\Big\}\\
\E{A_{nn} A_{ni}^2} &=& \displaystyle\sum_{k\ell qrst=1}^{p} G_{k\ell}G_{qr}G_{st} \Big\{[12] \mu^{ni}_{kr} \mu^{ni}_{\ell t} \mu^{nn}_{qs} 
+ [3] \mu^{nn}_{k\ell} \mu^{nn}_{qs} \mu^{ii}_{rt} \Big\}
\\
\E{A_{ni}^2 A_{nj}^2} &=& \displaystyle \sum_{k,\ell,q,r}\,\sum_{s,t,u,v} G_{k\ell} G_{qr} G_{st} G_{uv} \Big\{[3] \mu^{nn}_{kq} \mu^{nn}_{su} \mu^{ii}_{\ell r} \mu^{jj}_{tv} \nonumber\\
&&+[6] \mu^{nn}_{kq} \mu^{nn}_{su} \mu^{ij}_{\ell t}\mu^{ij}_{rv} 
+[12]\mu^{nn}_{kq} \mu^{ni}_{s\ell} \mu^{ni}_{ur} \mu^{jj}_{tv} 
+[24] \mu^{nj}_{kt} \mu^{nj}_{qv} \mu^{in}_{\ell s} \mu^{ni}_{ur} \nonumber\\
&&+[48] \mu^{ni}_{k\ell}\mu^{ij}_{rt} \mu^{nj}_{qv} \mu^{nn}_{su}
+[12] \mu^{nn}_{kq} \mu^{jn}_{ts} \mu^{nj}_{uv} \mu^{ii}_{r\ell} \Big\} 
\\
\E{A_{nn}A_{nj}A_{jk}A_{kn}} &=& \displaystyle \sum_{m,\ell,q,r}\,\sum_{s,t,u,v} G_{m\ell} G_{qr} G_{st} G_{uv} \Big\{ [3] \mu^{nn}_{m\ell}\mu^{nn}_{qv}\mu^{jj}_{sr}\mu^{kk}_{tu} \nonumber\\
&&+ [6] \mu^{nn}_{m\ell}\mu^{nn}_{qv}\mu^{jk}_{rt}\mu^{jk}_{su} 
+ [12] \mu^{nn}_{m\ell}\mu^{nj}_{qr}\mu^{nj}_{vs}\mu^{kk}_{tu} 
+ [24] \mu^{nk}_{mv} \mu^{nk}_{\ell u} \mu^{nj}_{qr} \mu^{nj}_{vs} \nonumber\\
&&+[48] \mu^{nj}_{mr} \mu^{jk}_{st} \mu^{nk}_{\ell u} \mu^{nn}_{qv} 
+ [12]\mu^{nn}_{k\ell}\mu^{nk}_{qt}\mu^{nk}_{vu}\mu^{jj}_{rs} \Big\} 
\end{eqnarray}
\end{proposition}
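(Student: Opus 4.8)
The plan is to reduce every expectation in Proposition \ref{mean-prop} to a purely combinatorial statement about Gaussian moments and then carry out the combinatorics. The starting point is Lemma \ref{lem:meanAA}, which already rewrites each of the four expectations as a single sum over space indices of a product of entries of $\matr{G}$ times one moment $\mu$ of order $2L$ (with $L=2$ for $\E{A_{nn}^2}$, $L=3$ for $\E{A_{nn}A_{ni}^2}$, and $L=4$ for the two remaining terms). Under the null hypothesis $H$ the process is zero-mean Gaussian, so the engine of the proof is Isserlis' (Wick's) theorem: each moment $\mu^{\alpha_1\ldots\alpha_{2L}}_{i_1\ldots i_{2L}}$ equals the sum, over all $(2L-1)!!$ perfect matchings of its $2L$ arguments, of the product of the $L$ pairwise covariances $\mu^{\alpha_a\alpha_b}_{i_a i_b}$ singled out by the matching. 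Substituting this expansion into the expressions of Lemma \ref{lem:meanAA} turns each expectation into a finite $G$-weighted sum of products of second-order moments, which is exactly the kind of object displayed in the proposition.

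The second step is the bookkeeping. Each argument of $\mu$ carries a fixed time superscript, inherited from the $A$-factor it belongs to (for instance, in $\E{A_{nn}A_{ni}^2}$ the six arguments follow the time pattern $n,n,n,n,i,i$), and a free space subscript that is one of the dummy indices contracted against the $G_{\cdot\cdot}$ factors. I would group the matchings into classes according to the multiset of time types they produce, that is according to how many factors of type $\mu^{nn}$, $\mu^{ni}$, $\mu^{ii}$, $\mu^{ij}$, etc.\ occur. All matchings in one class share the same McCullagh bracket pattern, and the coefficient $[k]$ appearing in the proposition is precisely the number of matchings realizing that pattern; the product written after the bracket is one representative, its subscripts recording which arguments are paired. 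Writing the result in McCullagh's notation \citep{Mccu87} is then just a compact way of collecting the Isserlis terms class by class.

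The third and most laborious step is enumerating the classes and their cardinalities. For $\E{A_{nn}^2}$ the moment has order $4$, giving a single class of $(4-1)!!=3$ matchings, hence $[3]$. For $\E{A_{nn}A_{ni}^2}$ the moment has order $6$ with $(6-1)!!=15$ matchings, splitting by whether the two time-$i$ arguments are paired together, producing one $\mu^{ii}$ factor and $3$ pairings of the four time-$n$ arguments ($[3]$), or paired separately, each with a time-$n$ argument, giving $4\times 3=12$ matchings ($[12]$); and $3+12=15$. For the two order-$8$ moments there are $(8-1)!!=105$ matchings, which I would enumerate according to the fate of the two time-$i$ and two time-$j$ (respectively time-$j$ and time-$k$) arguments: both like-time arguments paired together, one like pair and the other split, both split across different times, and so on. This yields the six classes with coefficients $3,6,12,24,48,12$, whose sum $105$ is the primary internal consistency check.

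The hard part will be the order-$8$ enumeration, not because any single case is deep but because one must track the exact space-subscript placement inside each representative, so that position-sensitive factors such as $\mu^{nj}_{kt}$, $\mu^{ni}_{s\ell}$ or $\mu^{jn}_{ts}$ are recorded with the right indices and not conflated with their relabelled counterparts. The symmetries of the problem are the main aid here: the two $A_{ni}$ (respectively $A_{nj}$) factors are interchangeable, each scalar satisfies $A_{ij}=A_{ji}$, and $\matr{G}$ is symmetric, so matchings related by these symmetries can be identified and the representatives chosen canonically; I would use them both to group matchings and to guard against double counting or omission. Beyond the count-to-$105$ check, a useful downstream verification is to specialize to the white (i.i.d.) case, where every $S_{ab}(\tau)$ with $\tau\neq0$ vanishes and the contracted expressions must collapse to the quantities underlying Mardia's Theorem \ref{mk}, in particular $\E{A_{nn}^2}=p(p+2)$; matching that limit pins down the combinatorial constants unambiguously.
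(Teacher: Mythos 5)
Your proposal is correct and follows essentially the same route as the paper: the paper's own derivation (Appendix~\ref{Appendix-A2}) likewise expands each Gaussian moment of order $2L$ into its $(2L-1)!!$ pair products, groups the pairings by their shared time structure on the ``meta-indices'', identifies the McCullagh bracket coefficients as the class cardinalities (e.g.\ $3+6+12+24+48+12=105$ for the order-$8$ terms), and uses the sum-to-$(2L-1)!!$ check. The only cosmetic difference is that you invoke Isserlis/Wick explicitly while the paper phrases the same expansion via McCullagh's bracket identity $M_{abcdefgh}=[105]S_{ab}S_{cd}S_{ef}S_{gh}$.
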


 The mean of $\hat{B}_p(N)$ then follows from (\ref{eq:DLBp}). 


\section{Expression of the variance of   $\widehat{B}_p(N)$  }\label{sec-genexp-variance}

From Lemma \ref{lem:DLBp}, we can also state what moments of $A_{ij}$ will be required in the expression of the variance of $B_p(N)$. 
\begin{lemma}\label{lem:varAA}
 By  raising (\ref{eq:DLBp}) to the second power and using the definition of $A_{ij}$ given in Lemma \ref{lem:DLBp}, we  can check that the following moments are required:
\begin{eqnarray}
\E{A_{nn}^2 A_{ii}^2} &=& \sum_{a,b,c,d,e,f,g,h=1}^{p} G_{ab}G_{cd}G_{ef}G_{gh} \,\mu^{nnnniiii}_{abcdefgh}
\\ 
\E{A_{nn}^{2}A_{ij}^{2}A_{ii}} &=& \sum_{a,b,c,d=1}^{p}\,\sum_{e,f,g,h=1}^{p}\,\sum_{m,\ell=1}^{p} G_{ab}G_{cd}G_{ef}G_{gh}\nonumber \\&& G_{m\ell}\,\mu_{abcdegm\ell fh}^{nnnniiiijj}
\\
\E{A_{nn}A_{kk}A_{ni}^{2}A_{kj}^{2}} &=& \sum_{a,b,c,d=1}^{p}\,\sum_{e,f,g,h=1}^{p}\,\sum_{m,\ell,q,r=1}^{p} G_{ab}G_{cd}G_{ef}G_{gh} G_{m\ell} \nonumber \\&& G_{qr}\,\mu_{abegcdmqfh\ell r}^{nnnnkkkkiijj} \\
\E{A_{kk}^{2}A_{ni}^{2}A_{nj}^{2}}&=& \sum_{a,b,c,d=1}^{p}\,\sum_{e,f,g,h=1}^{p}\,\sum_{m,\ell,q,r=1}^{p} G_{ab}G_{cd}G_{ef}G_{gh} G_{m\ell} \nonumber \\&& G_{qr}\,\mu_{abcdegmqfh\ell r}^{kkkknnnniijj} \\
\E{A_{nn}^{2}A_{ii}A_{ij}A_{jk}A_{ki}}&=&\sum_{a,b,c,d=1}^{p}\,\sum_{e,f,g,h=1}^{p}\,\sum_{m,\ell,q,r=1}^{p} G_{ab}G_{cd}G_{ef}G_{gh} G_{m\ell} \nonumber \\&& G_{qr}\,\mu_{abcdefgrhm\ell q}^{nnnniiiijjkk}  
\\
\E{A_{ni}^{2}A_{nj}^{2}A_{kt}^{2}A_{kk}} &=&\sum_{a,b,c,d=1}^{p}\,\sum_{e,f,g,h=1}^{p}\,\sum_{m,\ell,q,r=1}^{p}\,\sum_{s,u=1}^{p}\,G_{ab}G_{cd}G_{ef}G_{gh}G_{m\ell} \nonumber \\&& G_{qr}G_{su}\,\mu_{acegmqsubdfh\ell r}^{nnnnkkkkiijjtt}
\end{eqnarray}
\begin{eqnarray}
\E{A_{ii}A_{it}^{2}A_{nn}A_{nj}A_{jk}A_{kn}} &=&\sum_{a,b,c,d=1}^{p}\,\sum_{e,f,g,h=1}^{p}\,\sum_{m,\ell,q,r=1}^{p}\,\sum_{s,u=1}^{p}G_{ab}G_{cd}G_{ef}G_{gh}G_{m\ell} \nonumber \\&& G_{qr}G_{su}\,\mu_{abceghmu\ell qrsdf}^{iiiinnnnjjkktt}
\\
\E{A_{ni}^{2}A_{kt}^{2}A_{nj}^{2}A_{ku}^{2}} &=&\sum_{a,b,c,d=1}^{p}\,\sum_{e,f,g,h=1}^{p}\,\sum_{m,\ell,q,r=1}^{p}\,\sum_{s,v,w,z=1}^{p}G_{ab}G_{cd}G_{ef}G_{gh}G_{m\ell} \nonumber \\&& G_{qr}G_{sv}G_{wz}\,\mu_{acmqegswbd\ell rfhvz}^{nnnnkkkkiijjttuu}
\\
\E{A_{nn}A_{nj}A_{jk}A_{kn}A_{ii}A_{it}A_{tu}A_{ui}} &=&\sum_{a,b,c,d=1}^{p}\,\sum_{e,f,g,h=1}^{p}\,\sum_{m,\ell,q,r=1}^{p}\,\sum_{s,v,w,z=1}^{p}G_{ab}G_{cd}G_{ef}G_{gh}G_{m\ell} \nonumber \\&& G_{qr}G_{sv}G_{wz}\,\mu_{abchm\ell qzdefgrsvw}^{nnnniiiijjkkttuu}
\\
\E{A_{nn}A_{nj}A_{jk}A_{kn}A_{it}^{2}A_{iu}^{2}} &=&\sum_{a,b,c,d=1}^{p}\,\sum_{e,f,g,h=1}^{p}\,\sum_{m,\ell,q,r=1}^{p}\,\sum_{s,v,w,z=1}^{p}G_{ab}G_{cd}G_{ef}G_{gh}G_{m\ell} \nonumber \\&& G_{qr}G_{sv}G_{wz}\,\mu_{abchmqswdefg\ell rvz}^{nnnniiiijjkkttuu}
\end{eqnarray}
\end{lemma}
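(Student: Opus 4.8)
The plan is to read Lemma~\ref{lem:varAA} as a two-stage bookkeeping result: first identify \emph{which} products of $A$-terms survive when $\hat B_p(N)^2$ is expanded, and then apply the generic expansion~(\ref{eq:Annex_1}) to each survivor to read off the associated $2L$-order moment $\mu$. No new analytic estimate is needed beyond Lemma~\ref{lem:DLBp}.

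First I would write (\ref{eq:DLBp}) as $\hat B_p(N)=T_1+T_2+T_3+T_4+o(1/N)$, with $T_1=\frac6N\sum_n A_{nn}^2$, $T_2=-\frac{8}{N^2}\sum_n A_{nn}\sum_i A_{ni}^2$, $T_3=\frac1{N^3}\sum_n(\sum_i A_{ni}^2)(\sum_j A_{nj}^2)$ and $T_4=\frac2{N^3}\sum_{n,j,k} A_{nn}A_{nj}A_{jk}A_{kn}$. Squaring and using $\hat B_p(N)=O(1)$ to absorb the cross term $2\hat B_p(N)\,o(1/N)$ and the square $o(1/N)^2$ into a single $o(1/N)$, I am left with the ten products $T_aT_b$, $1\le a\le b\le4$ (those with $a\neq b$ carrying a factor $2$). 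In each product the two factors are summed over independent dummy indices, so after relabelling the time indices of the second factor to avoid clashes, $T_aT_b$ yields a product of $A$-terms whose expectation is exactly one of the listed moments. The correspondence is direct: $T_1T_1$ gives $\E{A_{nn}^2A_{ii}^2}$; $T_1T_2$ gives $\E{A_{nn}^2A_{ij}^2A_{ii}}$; $T_2T_2$ gives $\E{A_{nn}A_{kk}A_{ni}^2A_{kj}^2}$; $T_1T_3$ gives $\E{A_{kk}^2A_{ni}^2A_{nj}^2}$; $T_1T_4$ gives $\E{A_{nn}^2A_{ii}A_{ij}A_{jk}A_{ki}}$; $T_2T_3$ gives $\E{A_{ni}^2A_{nj}^2A_{kt}^2A_{kk}}$; $T_2T_4$ gives $\E{A_{ii}A_{it}^2A_{nn}A_{nj}A_{jk}A_{kn}}$; $T_3T_3$ gives $\E{A_{ni}^2A_{kt}^2A_{nj}^2A_{ku}^2}$; $T_4T_4$ gives $\E{A_{nn}A_{nj}A_{jk}A_{kn}A_{ii}A_{it}A_{tu}A_{ui}}$; and $T_3T_4$ gives $\E{A_{nn}A_{nj}A_{jk}A_{kn}A_{it}^2A_{iu}^2}$. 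Checking that these ten exhaust the list, with no further distinct $A$-product, is the combinatorial content of the first stage.

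For the second stage I would expand each factor $A_{\alpha\beta}=\sum_{r,c}G_{rc}\,x_r(\alpha)x_c(\beta)$ with a fresh coordinate pair, exactly as in~(\ref{eq:Annex_1}). Taking the expectation and pulling the $G$-products outside turns a product of $L$ factors $A$ into $\sum G_{\cdot\cdot}\cdots G_{\cdot\cdot}\,\E{x\cdots x}$, the last factor being a $2L$-order moment $\mu$. Its superscript string is the list of time arguments of the $x$'s in the order the factors are written, and its subscript string the matching coordinate indices; regrouping the $x$'s so that equal time arguments are collected permutes the subscripts into the compact arrangement displayed on each right-hand side. For $\E{A_{nn}A_{kk}A_{ni}^2A_{kj}^2}$, for instance, the six factors carry pairs $(a,b),(c,d),(e,f),(g,h),(m,\ell),(q,r)$ with times $nn,kk,ni,ni,kj,kj$; grouping coordinates by time index ($n\!:a,b,e,g$; $k\!:c,d,m,q$; $i\!:f,h$; $j\!:\ell,r$) produces $\mu^{nnnnkkkkiijj}_{abegcdmqfh\ell r}$, which is the stated expression.

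The repeated application of~(\ref{eq:Annex_1}) is routine; the part demanding care, and the main obstacle, is the index bookkeeping: assigning a distinct coordinate pair to each of up to eight $A$-factors, tracking the time argument attached to each coordinate, and permuting the subscripts of $\mu$ consistently with the chosen order of the $G$ factors so that the displayed strings are reproduced verbatim. I would verify two representative cases in full, namely the $T_2T_2$ term above and the double-cycle $T_4T_4$ term whose eight factors make the regrouping most error-prone, and then note that the remaining eight follow by the identical procedure, the orders of the individual moments being analysed in the sections that follow.
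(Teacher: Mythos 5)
Your proposal is correct and matches the paper's (implicit) argument exactly: the paper offers no proof beyond the one-line indication in the lemma statement itself, namely squaring (\ref{eq:DLBp}) to obtain the ten products $T_aT_b$ and then applying (\ref{eq:Annex_1}) with a fresh coordinate pair per $A$-factor, which is precisely your two-stage bookkeeping. Your ten-term correspondence agrees with the coefficients $\frac{36}{N^2},-\frac{96}{N^3},\frac{64}{N^4},\dots$ in the subsequent variance proposition, and your worked grouping for $\E{A_{nn}A_{kk}A_{ni}^2A_{kj}^2}$ reproduces the displayed index strings verbatim.
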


Then, as in Proposition \ref{mean-prop}, by using the results of Appendix \ref{app:moments}, the moments $\mu^*_*$ could be in turn expressed as a function of second order moments. For readability, we do not substitute here these values.

\begin{proposition}
\begin{eqnarray}
     Var\{\hat{B}_{p}\}&=&\displaystyle \frac{36}{N^2}\sum_{n}\sum_{i}\E{A_{nn}^2 A_{ii}^2}-\frac{96}{N^3}\sum_{j}\sum_{n,i}\E{A_{nn}^2 A_{ij}^2A_{ii}}\nonumber \\&& +\frac{64}{N^4} \sum_{n,i}\sum_{j,k}\E{A_{nn}A_{kk}A_{ni}^2 A_{kj}^2} + \frac{12}{N^4} \sum_{n,i,j,k}\E{A_{kk}^{2}A_{ni}^{2}A_{nj}^{2}}\nonumber\\&& + \frac{24}{N^4} \sum_{n,i,j,k} \E{A_{nn}^{2} A_{ii}A_{ij}A_{jk}A_{ki}} - \frac{16}{N^5} \sum_{n,i} \sum_{j,k,t} \E{A_{ni}^{2}A_{nj}^{2}A_{kt}^{2}A_{kk}} \nonumber\\&& - \frac{32}{N^5}\sum_{i,t}\sum_{n,j,k}\E{A_{ii}A_{it}^{2}A_{nn}A_{nj}A_{jk}A_{kn}}+ \frac{1}{N^6} \sum_{n,i,j}\sum_{k,t,u}\E{ A_{ni}^{2}A_{kt}^{2}A_{nj}^{2}A_{ku}^{2}} \nonumber\\&&  +\frac{4}{N^6} \sum_{n,j,k}\sum_{i,t,u} \E{A_{nn}A_{nj}A_{jk}A_{kn}A_{ii}A_{it}A_{tu}A_{ui}}\nonumber\\&&+\frac{4}{N^6} \sum_{n,j,k}\sum_{i,t,u} \E{A_{nn}A_{nj}A_{jk}A_{kn}A_{it}^{2}A_{iu}^{2}}- (\E{\hat{B}_{p}})^{2} 
\end{eqnarray}
\end{proposition}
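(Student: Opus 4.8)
The plan is to obtain $Var\{\hat{B}_p\}=\E{\hat{B}_p^2}-(\E{\hat{B}_p})^2$ directly from the expansion (\ref{eq:DLBp}). Writing the four leading contributions of Lemma \ref{lem:DLBp} as $T_1=\frac{6}{N}\sum_n A_{nn}^2$, $T_2=-\frac{8}{N^2}\sum_{n,i}A_{nn}A_{ni}^2$, $T_3=\frac{1}{N^3}\sum_{n,i,j}A_{ni}^2A_{nj}^2$ and $T_4=\frac{2}{N^3}\sum_{n,j,k}A_{nn}A_{nj}A_{jk}A_{kn}$, so that $\hat{B}_p=T_1+T_2+T_3+T_4+o(1/N)$, I would first expand the square as $\hat{B}_p^2=\big(\sum_{m=1}^4 T_m\big)^2+2\big(\sum_m T_m\big)\,o(1/N)+o(1/N^2)$ and then take the expectation. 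The backbone of the argument is the bookkeeping of the ten products $T_mT_{m'}$ (with $m\le m'$), each off-diagonal pair carrying a combinatorial factor $2$.

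Second, for each product I would multiply out the nested index sums and use linearity, so that every product becomes a multiple sum of quantities $\E{\prod_l A_{\alpha_l\beta_l}}$, i.e.\ exactly the generic moments introduced in (\ref{eq:Annex_1}). For example $T_1^2$ yields $\frac{36}{N^2}\sum_{n,i}\E{A_{nn}^2A_{ii}^2}$, $2T_1T_2$ yields $-\frac{96}{N^3}\sum_{n,i,j}\E{A_{nn}^2A_{ii}A_{ij}^2}$, and so on down to $T_4^2$. Multiplying the leading coefficients ($6/N$, $-8/N^2$, $1/N^3$, $2/N^3$) gives the prefactors $36/N^2$, $-96/N^3$, $64/N^4$, $12/N^4$, $24/N^4$, $-16/N^5$, $-32/N^5$, $1/N^6$, $4/N^6$, $4/N^6$, which are precisely those appearing in the statement; the ten products are in one-to-one correspondence with the ten moments recorded in Lemma \ref{lem:varAA}.

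Third, and this is the step that demands the most care, I would relabel the dummy summation indices of each multiple sum so as to bring every $\E{\prod_l A}$ into the canonical index pattern already fixed in Lemma \ref{lem:varAA}. This is necessary because a product such as $T_2T_4$ is symmetric only up to renaming of the free indices, whereas Lemma \ref{lem:varAA} records one specific labelling, e.g.\ $\E{A_{ii}A_{it}^2 A_{nn}A_{nj}A_{jk}A_{kn}}$. Matching then requires checking, pair by pair, that the product of $A$-factors coincides with the listed moment after such a permutation, and that the bijection between products and listed moments is respected so that no term is double-counted or assigned a wrong multiplicity. This is the error-prone, essentially combinatorial part of the derivation.

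Finally, I would subtract $(\E{\hat{B}_p})^2$ to reach the claimed expression, and justify that the remainder contributions are negligible. Under $H$ the $A_{ij}$ are quadratic forms in a Gaussian vector, so all the $T_m$ have second moments bounded uniformly in $N$, while the remainder of (\ref{eq:DLBp}) is $o(1/N)$ in $L^2$ by virtue of Lemma \ref{lemma1} together with the mixing Assumption \ref{mixing-ass}; hence by Cauchy--Schwarz $\E{T_m\,o(1/N)}=o(1/N)$ and $\E{o(1/N^2)}=o(1/N^2)$, and these terms do not affect the stated order. The main obstacle is therefore twofold: the disciplined index relabeling of the third step, and the analytic point of upgrading the purely algebraic $o(1/N)$ of Lemma \ref{lem:DLBp} to an $L^2$ statement, i.e.\ verifying that the mixing condition indeed forces the cross terms with the remainder down to $o(1/N)$ rather than merely $O(1/N)$.
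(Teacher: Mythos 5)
Your proposal is correct and follows essentially the same route as the paper: the paper gives no separate proof but obtains the proposition exactly by squaring the four-term expansion of Lemma \ref{lem:DLBp}, collecting the ten pairwise products (whose required moments are precisely those catalogued in Lemma \ref{lem:varAA}) and subtracting $(\E{\hat{B}_p})^2$, and all ten of your coefficients match the stated ones.
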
 

\section{Mean and variance of $\hat B_1(N)$ in the scalar case $(p=1)$}\label{sec-scalexps}

The complicated expressions obtained in the previous sections simplify drastically in the scalar case, and we get the results below.
\begin{equation}\label{meanb1}
    \boxed{\E{\hat{B}_{1}} = 3 - \frac{6}{N} - \frac{12}{N^2} \sum_{\tau=1}^{N-1} (N-\tau) \,\frac{S(\tau)^2}{S^2}  + o(\frac{1}{N})  }
\end{equation}
\begin{equation}\label{varb1}
    \boxed{Var\{\hat{B}_{1}\} = \frac{24}{N}\Big[1+\frac{2}{N}\sum_{\tau=1}^{N-1}(N-\tau)  \frac{S(\tau)^{4}}{S^{4}}\Big]   + o(\frac{1}{N})  }
\end{equation} 
 In particular in the i.i.d. case, $S(\tau)=0$ for $\tau\neq0$, and we get the well-known result \citep{Mard74:sank} \citep{ComoD95:begur}:
$$
\E{\hat{B}_{1}} \approx 3 - \frac{6}{N}, \quad \text{and} \quad Var\{\hat{B}_{1}\} \approx \frac{24}{N}.
$$
  The expressions of mean and variance above are identical to  those given in Theorem \ref{mk}, the difference being that here the ratio $\frac{N-1}{N+1}$ is replaced by its approximation of order $N^{-1}$, i.e. $\frac{N-1}{N+1} = 1-\frac{2}{N}+ o(1/N)$.

\section{Mean and variance of $\widehat B_2(N)$ in the bivariate case $(p=2)$} \label{sec-bivexps}
 In the bivariate case, expressions become immediately more complicated, but we can still write them explicitly, as reported below. We remind that $\mu_{ab}^{ij}=S_{ab}(i-j)$.
\begin{equation}
    \boxed{\E{\hat{B}_{2}} = 8 - \frac{16}{N} -  \frac{4}{N^2}\sum_{\tau=1}^{N-1} \frac{(N-\tau) Q_{1}(\tau)}{(S_{11}S_{22}-S_{12}^2)^2} + o(\frac{1}{N}) }
\end{equation}
 with 
\begin{eqnarray}
Q_{1}(\tau) &=& S_{11}S_{22}\Big[\, (S_{12}(\tau)+S_{21}(\tau))^2 - 4S_{11}(\tau)S_{22}(\tau)\,\Big] \nonumber \\ && + S_{12}^{2}\Big[\,2(S_{12}(\tau)+S_{21}(\tau))^{2}+4S_{22}(\tau)S_{11}(\tau)\,\Big]\nonumber \\ &&  -6 S_{22}S_{12}\Big(S_{11}(\tau)(S_{12}(\tau)+S_{21}(\tau))\Big) \nonumber \\ &&  -6 S_{11}S_{12}\Big(S_{22}(\tau)(S_{12}(\tau)+S_{21}(\tau)\Big) \nonumber \\ &&
+6S_{11}^{2}S_{22}^{2}(\tau) +6S_{22}^{2}S_{11}^{2}(\tau). 
\end{eqnarray}
\begin{equation}
    \boxed{
    Var\{\hat{B}_{2}\}=\frac{64}{N}+\frac{16}{N^{2}}\sum_{\tau=1}^{N-1}\frac{(N-\tau)Q_{2}(\tau)}{(S_{11}S_{22}-S_{12}^{2})^4} + o(\frac{1}{N})  } 
\end{equation}
with 
\begin{eqnarray}\label{draft}
Q_{2}(\tau)&=&\Big[2S_{11}^{2}(\tau)S_{22}^{2}(\tau)-16S_{11}(\tau)S_{22}(\tau)S_{12}(\tau)S_{21}(\tau)\nonumber\\&&+3(S_{21}^{2}(\tau)+S_{12}^{2}(\tau))^2+12S_{11}(\tau)S_{22}(\tau)(S_{12}(\tau)+S_{21}(\tau))^2\nonumber \\ &&-4S_{12}^{2}(\tau)S_{21}^{2}(\tau) \Big]S_{11}^{2}S_{22}^{2} \nonumber\\&&+2 S_{11}^2 S_{12}^{2}\Big[8S_{11}(\tau)S_{22}(\tau)+3(5S_{11}(\tau)S_{22}(\tau)+S_{21}(\tau)S_{12}(\tau))\nonumber\\ && (S_{21}(\tau)+S_{12}(\tau))^{2} - 4S_{21}(\tau)S_{12}(\tau)\big(S_{22}^{2}(\tau)+S_{21}(\tau)S_{12}(\tau)\big)\Big] \nonumber \\ && + 2 S_{22}^2 S_{12}^{2}\Big[8S_{22}(\tau)S_{11}(\tau)+3(5S_{11}(\tau)S_{22}(\tau))+S_{21}(\tau)S_{12}(\tau)\nonumber\\ && (S_{21}(\tau)+S_{12}(\tau))^{2} - 4S_{21}(\tau)S_{12}(\tau)\big(S_{11}^{2}(\tau)+S_{21}(\tau)S_{12}(\tau)\big) \Big]\nonumber \\ &&
+ 3S_{11}^{4}S_{22}^{4}(\tau) + 3S_{22}^{4}S_{11}^{4}(\tau) \nonumber\\ && + 8S_{12}^{4}\Big[S_{11}^{2}(\tau)S_{22}^{2}(\tau)+4S_{11}(\tau)S_{22}(\tau)S_{12}(\tau)S_{12}(\tau)+S_{21}^{2}(\tau)S_{12}^{2}(\tau)\Big]\nonumber\\&& -12S_{11}S_{12}\, S_{22}(\tau)(S_{12}(\tau)+S_{21}(\tau)\Big[(2S_{11}(\tau)S_{22}(\tau)+S_{21}^{2}(\tau)+S_{12}^{2}(\tau))\nonumber \\&&S_{11}S_{22}+2(S_{11}(\tau)S_{22}(\tau)+S_{12}(\tau)S_{21}(\tau))S_{12}^{2}\Big] \nonumber\\&& -12S_{22}S_{12}\, S_{11}(\tau)(S_{12}(\tau)+S_{21}(\tau)\Big[(2S_{11}(\tau)S_{22}(\tau)+S_{21}^{2}(\tau)+S_{12}^{2}(\tau))\nonumber \\&&S_{11}S_{22}+2\big(S_{11}(\tau)S_{22}(\tau)+S_{12}(\tau)S_{21}(\tau)\big)S_{12}^{2}\Big]. 
\end{eqnarray}
Note that the latter expressions are complicated, but easy to implement as demonstrated in the remaining sections.   Again for this case where $p=2$, the approximation $\frac{N-1}{N+1} = 1-\frac{2}{N}+ o(1/N)$ was used in the expressions of the mean and variance of $\hat B_2$.  

\section{Particular case: multidimensional embedding of a scalar process}\label{sec-embed-exps}
 In this section, we consider the particular case where the multivariate process consists of the embedding of a scalar process. More precisely, we assume that

$$
\vect{x}(n) = \left(\begin{array}{c} x_1(n)\\ \dots\\ x_p(n)\end{array} \right)
= \left(\begin{array}{c} y(n\delta+1)\\ \dots\\ y(n\delta+p)\end{array} \right).
$$
where $y(k)$ is a scalar wide-sense stationary process of correlation function $C(\tau)=\E{y(k)y(k-\tau)}=S_{11}(\tau/\delta)$.
Note that now, because of the particular form of $\vect{x}(n)$, we can exploit the translation invariance by  remarking that $S_{ab}(\tau)=\E{x_a(n\delta)x_b(n\delta-\tau\delta)}$ implies $S_{ab}(\tau)=C(\tau\delta+a-b)$, for $1\le a,b\le p$.

To keep results as concise as possible, we  assume the notation
$\gamma_i(\tau)=C(\tau\delta+i)$, and the shortcut $C_j=C(j)$. The main goal targeted by defining these multiple notations is to obtain more compact expressions.

\subsection{Bivariate embedding}
The bivariate case is more difficult but the expressions still have a simple form:
\begin{equation}\label{meanb2}
    \boxed{\E{\hat{B}_{2}} \approx 8 - \frac{16}{N} -  \frac{4}{N^2}\sum_{\tau=1}^{N-1} \frac{(N-\tau) q_1(\tau)}{(C_{0}^2-C_{1}^{2})^2}}
\end{equation}
\begin{equation}\label{varb2}
    \boxed{
    Var\{\hat{B}_{2}\}\approx \frac{64}{N}+\frac{16}{N^{2}}\sum_{\tau=1}^{N-1}\frac{(N-h)q_2(\tau)}{(C_{0}^{2}-C_{1}^{2})^4}} 
\end{equation}

\noindent with $q_1(\tau)$ and $q_2(\tau)$  defined below, where $\gamma_i$ stands for  $\gamma_{i}(\tau)$:\noindent
\begin{equation}
    \begin{split}
    q_1(\tau) &= \Big[(\gamma_{1} + \gamma_{-1})^2 + 8 \gamma_{0}^2\Big ]C_{0}^2
    -12 C_{0}C_{1}\, \gamma_{0} (\gamma_{1} + \gamma_{-1}) \\&+ \Big[2 (\gamma_{1} + \gamma_{-1})^2 + 4 \gamma_{0}^2\Big]C_{1}^2,  
\end{split}
\end{equation}
\begin{equation}
\begin{split}
q_2(\tau)&=\Big[8(\gamma_{0}^{2}-\gamma_{1}\gamma_{-1})^{2}+3(\gamma_{1}^{2}-\gamma_{-1}^{2})^{2}+12 \gamma_{0}^{2}(\gamma_{1}+\gamma_{-1})^{2}\Big]C_{0}^{4} \\&+ 4\,\Big[8\gamma_{0}^{4}+3\,(5\gamma_{0}^{2}+\gamma_{1}\gamma_{-1})(\gamma_{1}+\gamma_{-1})^{2}-4\gamma_{1}\gamma_{-1}(\gamma_{0}^{2}+\gamma_{1}\gamma_{-1})\Big]C_{0}^{2}C_{1}^{2} +\\& 8\,\Big[\gamma_{0}^{4}+4\gamma_{0}^{2}\gamma_{1}\gamma_{-1}+\gamma_{1}^{2}\gamma_{-1}^{2}\Big]C_{1}^{4}-24C_{0}C_{1}\, \gamma_{0}(\gamma_{1}+\gamma_{-1})\Big[(2\gamma_{0}^{2}+\gamma_{1}^{2}+\gamma_{-1}^{2})C_{0}^{2}\\+&2(\gamma_{0}^{2}+\gamma_{1}\gamma_{-1})C_{1}^{2}\Big].
\end{split}
\end{equation}
The exact computation for the trivariate embedding case have also been conducted; but because of their lengthy expressions (especially that of the variance), they are not detailed here and can be given as supplementary material upon request.

\section{Computer experiments}\label{sec-comp-exp}
In this section, the preceding results are illustrated on dedicated computer experiments. To emphasize the importance of the univariate and the bivariate normality tests on colored random process, we simulate \emph{correlated} bivariate random processes with Gaussian marginals. The generation procedure is briefly described in the next section. Then tests are performed to detect non Gaussian nature of the joint distribution while the marginals remain Gaussian.

\begin{remark}
Up to now, we have derived the mean and variance of a test variable $\widehat{B}_p$. In order to compute the power of the test, we need its distribution. First, $\widehat{B}_p$ is shown in \citep{Mard70:biom} to converge to ${B}_p$ in probability. Next, ${B}_p$ is a sum of n.i.d. random variables enjoying the \emph{mixing} Property \ref{mixing-ass}; for this reason ${B}_p$ converges to a normal variable thanks to the Law of Large Numbers \citep[ch.IV]{Hann70}. This guarantees that $(\widehat{B}_p-\E{\widehat{B}_p})/\sqrt{VAR\{\widehat{B}_{p}\}}$ is asymptotically $\mathcal{N}(0,1)$.
\end{remark}

\subsection{Gaussian Marginals under $\bar{H}$}
Copulas are a   classical framework, which is simple to implement  
 for defining multivariate distributions with controlled joint distribution function. It is known that there is a \textit{unique} copula -- called the Gaussian copula $\mathcal{C}_{R}$ -- that  produces the bivariate Gaussian distribution, fully specified by the correlation matrix $R$:\\ 

\begin{equation}
\mathcal{C}_{R}(u,v) = \int_{-\infty}^{\Phi^{-1}(u)}\int_{-\infty}^{\Phi^{-1}(v)} \frac{1}{2\pi (1-R_{12}^{2})^{1/2}} exp\Big\{ \frac{s^{2}-2R_{12}st+t^{2}}{2(1-R_{12}^{2})}\Big\}\,ds\,dt 
\end{equation}
where $\Phi^{-1}$ 
is the inverse of the cumulative distribution function of the standard normal distribution.
As Sklar's theorem (cf. Appendix \ref{Sklar}) guarantees the  uniqueness of the copula generating a given bivariate distribution, non Gaussian distributions can easily be obtained by using other types of copulas. Namely here, Clayton and Gumbel bivariate copulas are used as examples: 



\begin{eqnarray}
\textit{Clayton: }\mathcal{C}_{\theta}(u,v)&=& max\{u^{-\theta}+v^{-\theta}-1;0\}, \theta \in [-1,\infty)\char`\\ \{0\} \\
\textit{Gumbel: }\mathcal{C}_{\theta}(u,v)&=& exp\Big\{ -(-log(u)^{\theta} + -log(v)^{\theta} )^\frac{1}{\theta}\Big\}, \theta \in [1,\infty)
\end{eqnarray}
\FIN

Since Sklar's theorem  does not impose \textit{independence} of any variate $u$ or $v$ of $\mathcal{C}_{\theta}(u,v)$, we   need to   propose the following algorithm to generate a bivariate copula with \textit{colored} Gaussian marginals. 
\begin{itemize}
    \item Generate two \textit{i.i.d} centered normalized Gaussian variables: $\eta_{1},\eta_{2} \underset{i.i.d}{\sim} \mathcal{N}(0,1)$ 
    \item Make the previous variables correlated in time by a first-order auto-regressive filter:
    \begin{eqnarray}
    y_{1}(n)&=&0.8 y_{1}(n-1)+\eta_{1}(n) \nonumber \\
    y_{2}(n)&=&0.8 y_{2}(n-1)+\eta_{2}(n) \nonumber
    \end{eqnarray}
    Thus $\E{y_{1}(n)y_{1}(n-k)} = 0.8^{|k|}$, for all $k \in \mathbb{Z}$.
    \item Transform $y_{1}$ and $y_{2}$ as:
    \begin{eqnarray}
        u &=& \Phi(y_{1}) \\
        v &=& \Phi(y_{2}) 
    \end{eqnarray}
    Note that $u$ and $v$ are uniformly distributed on $[0,1]$. Thus, we can generate new samples $u^{\prime}$, $v^{\prime}$ coupled by a given copula $\mathcal{C}_{\theta}$. For more details about efficient sampling of copula see the (Marshall and Olkin 1988 algorithm) cited in \citep{Hofe08:csda}.
    
    \item Transform $u^{\prime}$ and $v^{\prime}$ to obtain Gaussian standard marginals:  $\vect{x}=(x_1(n),\, x_{2}(n))^{T}:$
    \begin{eqnarray}
    x_{1}(n)&=&\Phi^{-1}\big(u(n)\big) \nonumber \\
    x_{2}(n)&=&\Phi^{-1}\big(v(n)\big) \nonumber
    \end{eqnarray}
\end{itemize}

\subsection*{Simulation study} 
For a given copula $\mathcal{C}$, we perform $M=2000$ realizations of $\vect{x}(n)=(x_{1}(n), x_{2}(n))^{T}$ of total length $N=1000$.
First, the $p$-values of the two-sided tests are computed based on:
\begin{equation*}
    t = \frac{\hat{B}_{(.)}-\E{\hat{B}_{(.)}}}{\sqrt{\text{Var}\{\hat{B}_{(.)}\}}}
\end{equation*}
Recall that this statistic is standard normal. Then $p$-value = $2(1-\Phi(|t|))$ is compared to pre-specified significance levels $\alpha$.
For any $p$ smaller than $\alpha$, it is considered heuristically that the test rejected $H$.
The empirical rejection rates, defined by $\frac{\text{Number of rejections}}{M}$ for each statistic $\hat{B}_{1,i.i.d}, \hat{B}_{1}$ and $\hat{B_{2}}$ are reported in Table \ref{table:CopulasPerf}.\\

\begin{table}[htbp]
\begin{tabular}{l|llllll}
\multirow{2}{*}{\begin{tabular}[c]{@{}l@{}}Test \\ statistic\end{tabular}}                  & \multicolumn{2}{l|}{Gaussian $R_{12} = 0.8$}                                                                                                        & \multicolumn{2}{l|}{Clayton $\theta=2$}                                                                                             & \multicolumn{2}{l}{Gumbel $\theta=5$}                                                                                              \\ \cline{2-7} 
                                                                                            & $\alpha=5\%$                                                      & \multicolumn{1}{l|}{$\alpha=10\%$}                               & $\alpha=5\%$                                                     & \multicolumn{1}{l|}{$\alpha=10\%$}                               & $\alpha=5\%$                                                     & \multicolumn{1}{l}{$\alpha=10\%$}                               \\ \hline
\begin{tabular}[c]{@{}l@{}}$\hat{B}_{1,i.i.d}$\\ $\hat{B}_{1}$\\ $\hat{B}_{2}$\end{tabular} & \begin{tabular}[c]{@{}l@{}}0.1660 \\ 0.0450\\ 0.0480\end{tabular} & \begin{tabular}[c]{@{}l@{}}0.2460\\ 0.0730\\ 0.0801\end{tabular} & \begin{tabular}[c]{@{}l@{}}0.1011\\ 0.1060\\ 0.9890\end{tabular} & \begin{tabular}[c]{@{}l@{}}0.1651\\ 0.1701\\ 0.9920\end{tabular} & \begin{tabular}[c]{@{}l@{}}0.1189 \\ 0.0390\\ 0.9920\end{tabular} & \begin{tabular}[c]{@{}l@{}}0.1930\\ 0.0860\\ 0.9960\end{tabular} \\ \hline
\end{tabular}
\caption{Empirical Rejection rate at two significance levels : $\alpha=5\%,10\%$}
\label{table:CopulasPerf}
\end{table}
\begin{figure}[htbp]
\centering
\subfloat[Gaussian Copula $R_{12}=.8$]{\label{fig:d}\includegraphics[width=0.45\textwidth]{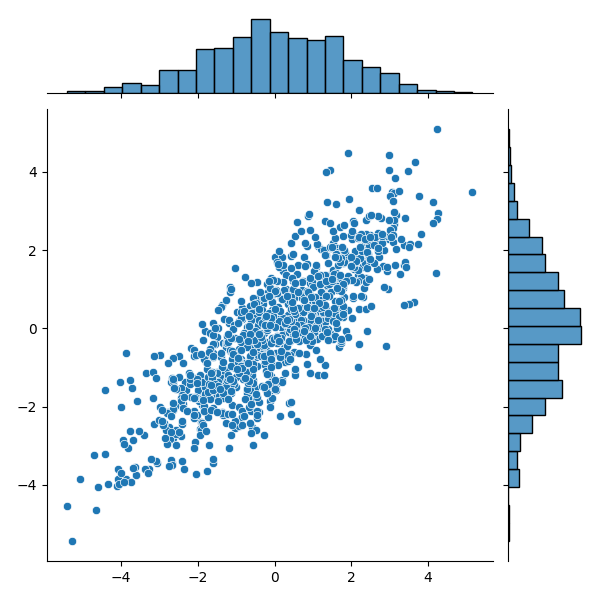}}\qquad%
\subfloat[Clayton $\theta=2$]{\label{fig:b}\includegraphics[width=0.45\linewidth]{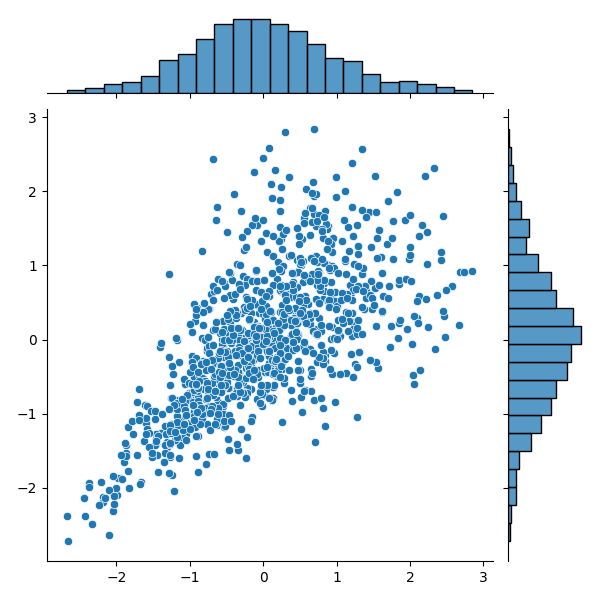}}\qquad%
\subfloat[Gumbel $\theta=5$]{\label{fig:c}\includegraphics[width=0.45\textwidth]{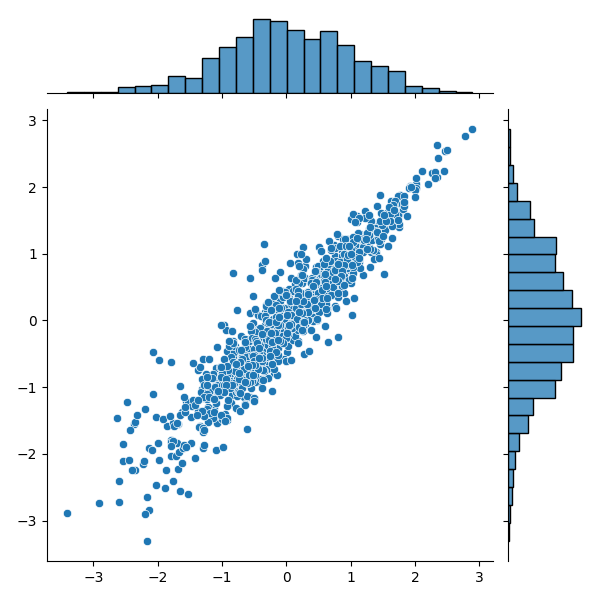}}%
\caption{Examples of non-Gaussian process whose marginals are standard normal}
\label{fig:2copulas}
\end{figure}

\subsection*{Mardia's test}
\noindent
$\bullet$ $\hat{B}_{1,i.i.d}$: Under the null hypothesis $H$, the rejection rate surpasses the nominal level. That $\hat{B}_{1,i.i.d}$ over-rejects $H$ is due to the one-dimensional marginal being time -correlated. Such observation was already formulated by \citep{Moor82:as} and \citep{Gass75:biom} who showed that the 
correlation among samples is confounded with lack of Normality.
\\
$\bullet \hat{B}_{1,i.i.d}$ and $\hat{B}_{1}$ test one-dimensional marginals only, therefore they are always conservative.
\\
\smallskip
$\bullet \hat{B}_{2}$: The rejection rates do not differ substantially from the nominal level when data is distributed according to bivariate Gaussian. Under $\bar{H}$, this test has very high rejection rates, which confirms the necessity of taking into account the full dimension to design a powerful test.
\subsection{Detection of a time-series embedded in Gaussian noise}
In this simulation, the detection of an additive corruption in a Gaussian process is considered:
\begin{equation} y(n)= x(n)+ k b(n)\end{equation}
where $x(n)$ is a first order auto-regressive process AR(1): $x(n) = 0.8x(n-1)+\eta(n)$ and where $\eta\underset{iid}{\sim}\mathcal{N}(0,S)$; $b(n)=0.8b(n-1)-0.5b(n-2)+\epsilon(n)$, where $\epsilon$ follows a double-exponential distribution with unit scale parameter. \\
We perform $500$ replications of $\{y(n)\}$ of total length  $N_{tot} = n_{drop} + N$, the first $n_{drop} = 1000$ observations at the beginning of the sample are discarded to alleviate side effects and reduce the dependence on initial values: $x(1)=\eta(1)$ and $b(1) = \epsilon(1)$. For each data record, the covariance function
$\gamma_{a,b}(i)$ is estimated once for a fixed dimension $p$ for all the test statistics. \\
Testing the normality of the process $y(n)$ can be accomplished by standard scalar tests. By exploiting the results in Section \ref{sec-embed-exps}, we propose to test the joint normality of its successive values: $\vect{x}(n)=(y(2n+1),y(2n+2))^{T}$; Note that here $\delta=2$. \\
The normality test can be reformulated in terms of the detection of an \textit{unknown} non-Gaussian signal embedded in Gaussian noise.
The ability of the test to detect the presence of $b(n)$ for different $SNR=k^{2}\frac{\E{b(n)^{2}}}{\E{x(n)^{2}}}$ is reported in Figure \ref{fig:detection}.

\begin{figure}[htbp]
    \centering
    \includegraphics[width=.5\textwidth]{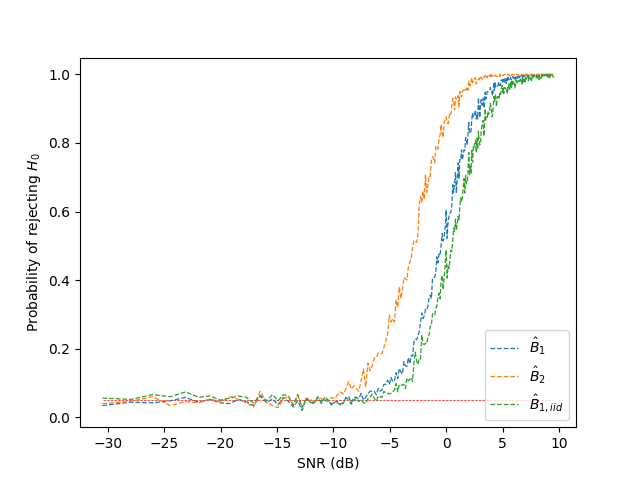}
    \caption{Empirical rejection rate at $\alpha = 5\%$ (in red dashed horizontal line) for 300 SNR values in logarithmic scale (dB)}
    \label{fig:detection}
\end{figure}

As SNR increases, statistic $\hat{B}_{2}$ is the first to detect the presence of an additive non-Gaussian process, followed by $\hat{B}_{1}$ and $\hat{B}_{1,i.i.d}$ whose behaviors do not differ substantially.

\newpage
\section{Concluding remarks}

Mardia's multivariate kurtosis, $\hat{B}_{p}$, is intended  to test the \textit{joint normality} when statistically independent realizations are available.
Without assuming the latter independence, we derive in this paper the asymptotic distribution of the multivariate kurtosis under the null hypothesis. Limited by  the length of the expressions for $p>3$, the exact expressions are reported only in the bivariate case. 

There are many ways to construct non-Gaussian processes with Gaussian marginals, as illustrated by copulas, and scalar tests often  lead to misdetections, whereas our test continues to be powerful. Our test also proves to be useful for scalar processes, for example by testing the joint normality of successive values of a time-series. 

\bigskip

\begin{paragraph}{\textbf{Acknowledgment}}
This work has been partially supported by the MIAI chair ``Environmental issues underground''of Institut MIAI@Grenoble Alpes (ANR-19-P3IA-0003).

\end{paragraph}

\bibliographystyle{elsarticle-num}
\bibliography{normality}

\section{Appendices}\label{sec:appendix}

\subsection{McCullagh's bracket notation and expression of the higher moments under the null hypothesis}\label{Appendix-A1}

 McCullagh's bracket notation \citep{Mccu87} allows to write into a compact form a sum of terms that can be deduced from each other by generating all possible partitions of the same type. For instance, we have the following expression for fourth order  moments $M_{abcd}$ of a zero-mean multivariate normal variable with covariance $\matr{S}$:
\begin{equation}
M_{abcd}=S_{ab}S_{cd}+S_{ac}S_{bd}+S_{ad}S_{bc}=[3]S_{ab}S_{cd}
\end{equation}
Moments of higher order can be found  easily:
\begin{eqnarray}
\textrm{order~6:~}\, & M_{abcdef} &= [15] S_{ab}S_{cd}S_{ef} \\
\textrm{order~8:~}\, & M_{abcdefgh} &= [105] S_{ab}S_{cd}S_{ef}S_{gh} \\
\textrm{order~10:} & M_{abcdefghij} &= [945] S_{ab}S_{cd}S_{ef}S_{gh}S_{ij} \\
\textrm{order~12:} & M_{abcdefghijk\ell} &= [10395] S_{ab}S_{cd}S_{ef}S_{gh}S_{ij}S_{k\ell} \\
\textrm{order~14:} & M_{abcdefghijk\ell mn} &= [135135] S_{ab}S_{cd}S_{ef}S_{gh}S_{ij}S_{k\ell}S_{mn} \\
\textrm{order~16:} & M_{abcdefghijk\ell mnpq} &= [2027025] S_{ab}S_{cd}S_{ef}S_{gh}S_{ij}S_{k\ell}S_{mn}S_{pq}
\end{eqnarray}
since it is well known that there are $[\frac{2r!}{2^r\,r!}]$ terms in the moment of order $2r$.

\subsection{Calculation methodology}\label{Appendix-A2}
Remind that, as introduced in Lemma \ref{lem:DLBp},   $A_{\alpha^l\beta^l}=\vect{x}(\alpha_l)^TG\vect{x}(\beta_l)$, where $G$ stands for the true precision matrix of the process whose terms are $G_{r,c}$, and where $(r,c) \in \{1,\ldots,p\}^2$. 

 Referring to the expression of $\hat B_p$ or $\hat B_p^2$ as derived from equation (\ref{eq:DLBp}), it appears that the indices $(\alpha_l, \beta_l)$ take values on a restricted set $\mathcal S=\{i, j, k, \ldots\}$, and $|\mathcal S| \ll N$.  
The following compact notation is therefore  introduced
\begin{equation}\label{eq:Annex_2}
\mu_{r_1\ldots r_L c_1\ldots c_L }^{\alpha_1\ldots\alpha_L\beta_1\ldots\beta_L}=M_{i^{\eta_i}j^{\eta_j}k^{\eta_k}....}
\end{equation}
where 
\begin{equation*}
\eta_i = \sum_{l=1}^L \left(\I_{[\alpha_l=i]}+ \I_{[\beta_l=i]}\right), \forall i \in \mathcal S
\end{equation*}
Note that  the subscripts $r_1, ...c_1...$ are skipped here for sake of readability, though any permutation of the superscripts in equation (\ref{eq:mu}) requests the corresponding permutation of the subscripts. It is easier to describe the general methodology by the typical example below. 

\subsubsection*{Example}
Consider the moment $\E{A_{nn}A_{nj}A_{jk}A_{kn}}$. According to equation (\ref{eq:Annex_1}) it will be expanded as a sum of moments of order 8 (i.e. $L=4$); using the compact notation from equation (\ref{eq:Annex_2}), we get 
\begin{eqnarray}\label{eq:Annex_3}
\E{ A_{nn} A_{nj}A_{jk}A_{kn} } =\sum_{ ((r_i,c_i)_{i=1\ldots 4})= 1 }^p G_{r_1c_1}G_{r_2c_2}G_{r_3c_3}G_{r_4c_4}
 \mu_{r_1c_1r_2c_2r_3c_3r_4c_4}^{nnnjjkkn} \nonumber \\
 = \sum_{ ((r_i,c_i)_{i=1\ldots 4})= 1 }^p G_{r_1c_1}G_{r_2c_2}G_{r_3c_3}G_{r_4c_4}M_{n^4j^2k^2} 
\end{eqnarray}
The sum involves $2^{2L}=64$ terms. It is reminded that the coefficients $r_i$ or $ c_i$ indicate the coordinate of the vector process (or space coordinate, thus taking values on $\{1,\ldots, p\}$) , whereas  time indices $n, j, k$ tale values on$\{1,\ldots, N\}$. Following McCullagh's notations, under the assumption ($H_0$) that the $p$-dimensional process is centered and jointly Gaussian, for this particular 8-th order moment 
\begin{equation*}
M_{abcdefgh}=[105]S_{ab}S_{cd}S_{ef}S_{gh}
\end{equation*}
which expresses that under $H_0$, higher even order moments (odd-order moments are zero) may be expanded as sums of products of second order moments. It must be reminded that here,  $a,b,c,d,e,f,g,h$ stand for 'meta-indices' defined  in the present example by $(n,r_1),(n,c_1),(n,r_2),(n,c_4),(j,c_2),(j,r_3),(k,c_3),(k,r_4)$ respectively, as it appears in equation (\ref{eq:Annex_3}). Plugging the above expansion in equation (\ref{eq:Annex_3}) leads to summing over $64\times 105$ terms! However, in most cases of interest many terms may be grouped together and highlight the behavior of equation (\ref{eq:Annex_1}). The case $p=1$ is briefly sketched below as an illustration. 

The case $p=1$ implies that $r_i=c_i=1$ $\forall i \in \{1,\ldots, (L=4)\}$; the particular 8-th order moment in equation (\ref{eq:Annex_3}) may be simply written as $M_{n^4j^2k^2}$, whose expansion into sum of products of second order moments will involve the following products : 
(as there is no ambiguity in this case,  we set $M_{ij}\overset{nota.}{=}S_{ij}$), 
\begin{eqnarray*}
S_{nn}S_{nn}S_{jj}S_{kk} & \mbox{appearing~}  3  \mbox{~times} \\
S_{nn}S_{nn}S_{jk}S_{jk} & \mbox{appearing~}  6  \mbox{~times}\\
S_{nn}S_{nj}S_{nj}S_{kk} & \mbox{appearing~} 12  \mbox{~times}\\
S_{nk}S_{nk}S_{nj}S_{nj} & \mbox{appearing~} 24  \mbox{~times}\\
S_{nj}S_{jk}S_{nk}S_{nn} & \mbox{appearing~}  48  \mbox{~times}\\
S_{nn}S_{nk}S_{nk}S_{jj} & \mbox{appearing~}  12  \mbox{~times}
\end{eqnarray*}

For example the number of occurences of the term of type  $S_{nk}S_{nk}S_{nj}S_{nj}$  is given by 
$$
(4\times2 \times 3\times1)/2\times (2\times2\times1\times1)/2 = 24
$$ 
where $4\times 2$ stand for the number of possible choices for index $i$ (one out of 4) times the number of possible choices for index $k$ (one out of 2); then  $3\times 1$ stand for the number of remaining possibilities to select index $i$ times the remaining choices for $k$; Division by 2 accounts for the fact that permutations of terms  $S_{ik}$  were counted twice. All other occurence calculations follow the same guidelines.   Finally, one gets for the case $p=1$
\begin{eqnarray*}
 M_{n^{4}jjkk} = 3S_{nn}^2 S_{jj}S_{kk}+6S_{nn}^{2}S_{jk}S_{jk} + 12S_{nn}S_{ij}^{2}S_{kk}+24S_{nk}^{2}S_{nj}^{2}+.... \\
48S_{nj}S_{jk}S_{nk}S_{nn}+12S_{nn}S_{nk}^{2}S_{jj}
 \end{eqnarray*}
 which can be directly plugged into equation (\ref{eq:Annex_3}).
 Note that the sum of all coefficient is actually 105, as expected for an 8-th order moment.

The cases $p\geq 2$ turns out to be a bit more complicated, as one has to deal with the 'meta-indices' directly. However counting the number of configurations involving the same time indices follows the same lines as in the case $p=1$. Going back to the example introduced  above  for $p=2$, one gets 

\begin{eqnarray*}
\E{ A_{nn} A_{nj}A_{jk}A_{kn} } =\sum_{ ((r_i,c_i)_{i=1\ldots 4})= 1 }^p G_{r_1c_1}G_{r_2c_2}G_{r_3c_3}G_{r_4c_4}
\left\{ \right.
\left[ 3\right] \mu_{r_1c_1}^{nn} \mu_{r_2c_4}^{nn} \mu_{c_2r_3}^{jj} \mu_{c_3r_4}^{kk}+ \\
\left[ 6\right] \mu_{r_1c_1}^{nn} \mu_{r_2c_4}^{nn}  \mu_{c_2c_3}^{jk} \mu_{r_3r_4}^{jk} +  
\left[ 12\right] \mu_{r_1c_1}^{nn} \mu_{r_2c_2}^{nj} \mu_{c_4r_3}^{nj} \mu_{c_3r_4}^{kk} + 
\left[ 24\right] \mu_{r_1c_3}^{nk} \mu_{c_1r_4}^{nk}\mu_{r_2c_2}^{nj} \mu_{c_4r_3}^{nj} + \\
\left[ 48\right] \mu_{r_1c_2}^{nj} \mu_{r_3c_3}^{jk} \mu_{c_1r_4}^{nk} \mu_{r_2c_4}^{nn} + 
\left[12\right] \mu_{r_1c_1}^{nn} \mu_{r_2c_3}^{nk} \mu_{c_4r_4}^{nk} \mu_{c_2r_3}^{jj}
\left. \right\}
\end{eqnarray*}

where we have used notations $\mu_{rc}^{\alpha\beta}$  to emphasize that the permutations (whose number is indicated using McCullagh's brakets) are applied on the 'meta-indices' and grouped such that they share the same 'time structure'; This allow to get the same values as in the case $p=1$, though replacing the scalar coefficients by McCullagh's brakets. 

\subsection{Multivariate moments up to order 12}\label{app:moments}
In this section, we give all moments of a zero-mean multivariate normal variable of even order. Most of these expressions have not been reported in the literature. 
In addition, for the sake of readability, when an index is repeated more than three times, we assume an alternative notation, for instance at order 10:
$$
M_{iiiiijjjjk} = M_{i^5j^4k}
$$
Furthermore, we use notation introduced in (\ref{eq:Annex_2}) involving meta-indices; more precisely, since each subscript is always associated with a superscript, we may omit the subscript. In order to lighten notation, especially when terms need to be raised to a power, we put the latter superscript in subscript. For instance in (\ref{Miiij-eq}), $M_{abcd}^{iiij}$ is replaced by $M_{iiij}$. In the list below, moments are sorted by increasing $D$, where $D$ denotes the number of distinct indices.

\medskip
\noindent\textbf{Order 4, D=2.}
\begin{eqnarray}
M_{iiij} &=& [3] \mu_{ab}^{ii} \mu_{cd}^{ij} \label{Miiij-eq} \\
M_{iijj} &=& [2] \mu_{ab}^{ij}\mu_{cd}^{ij} +  \mu_{ab}^{ii} \mu_{cd}^{jj}
\end{eqnarray}
\textbf{Order 4, D=3.}
\begin{eqnarray}
M_{iijk} &=& \mu_{ab}^{ii}\mu_{cd}^{jk} + [2] \mu_{ab}^{ij}\mu_{cd}^{ik}
\end{eqnarray}
\noindent\textbf{Order 6, D=2.}
\begin{eqnarray}
M_{i^5j} &=& [15] \mu_{ab}^{ii}\mu_{cd}^{ii} \mu_{ef}^{ij}\\
M_{i^4jj} &=&  [12] \mu_{ae}^{ij}\mu_{bf}^{ij} \mu_{cd}^{ii} + [3] \mu_{ab}^{ii}\mu_{cd}^{ii} \mu_{ef}^{jj}\\
M_{iiijjj} &=& [6] \mu_{ad}^{ij}\mu_{be}^{ij}\mu_{df}^{ij} + [9] \mu_{ab}^{ii} \mu_{cd}^{ij}\mu_{ef}^{jj}
\end{eqnarray}
\noindent\textbf{Order 6, D=3.}
\begin{eqnarray}
M_{i^4jk} &=&[3] \mu_{ab}^{ii}\mu_{cd}^{ii} \mu_{ef}^{jk} + [12] \mu_{ae}^{ij}\mu_{bf}^{ik}\mu_{cd}^{ii}\\
M_{iiijjk} &=& [6] \mu_{ad}^{ij}\mu_{be}^{ij} \mu_{cf}^{ik} + [6] \mu_{ad}^{ij}\mu_{bc}^{ii}\mu_{ef}^{jk} + [3] \mu_{ab}^{ii} \mu_{de}^{jj} \mu_{cf}^{ik} \\
M_{iijjkk} &=& \mu_{ab}^{ii} \mu_{cd}^{jj} \mu_{ef}^{kk} + [2] \mu_{ab}^{ii}\mu_{ce}^{jk}\mu_{bf}^{jk} + [2] \mu_{cd}^{jj} \mu_{ae}^{ik}\mu_{bf}^{ik}+ [2] \mu_{ef}^{kk}\mu_{ac}^{ij}\mu_{bd}^{ij} \nonumber \\&&+ [8] \mu_{ac}^{ij} \mu_{de}^{jk} \mu_{bf}^{ik} 
\end{eqnarray}

\noindent\textbf{Order 8, D=2.}
\begin{eqnarray}
M_{i^7j} &=& [105] \mu_{ab}^{ii}\mu_{cd}^{ii}\mu_{ef}^{ii} \mu_{gh}^{ij}\\
M_{i^6jj} &=&  [90] \mu_{ag}^{ij}\mu_{bh}^{ij} \mu_{cd}^{ii}\mu_{ef}^{ii} +  [15] \mu_{ab}^{ii}\mu_{cd}^{ii}\mu_{ef}^{ii} \mu_{gh}^{jj}\\
M_{i^5jjj} &=& [60] \mu_{af}^{ij}\mu_{bg}^{ij}\mu_{ch}^{ij} \mu_{de}^{ii}+ [45] \mu_{ab}^{ii}\mu_{cd}^{ii}\mu_{ef}^{ij} \mu_{gh}^{jj} \\
M_{i^4j^4} &=& [9] \mu_{ab}^{ii}\mu_{cd}^{ii} \mu_{ef}^{jj}\mu_{gh}^{jj} + [72] \mu_{ab}^{ii} \mu_{ce}^{ij}\mu_{df}^{ij} \mu_{gh}^{jj} + [24] \mu_{ae}^{ij}\mu_{bf}^{ij}\mu_{cg}^{ij}\mu_{dh}^{ij} 
\end{eqnarray}
\noindent\textbf{Order 8, D=3.}
\begin{eqnarray}
M_{i^6jk} &=& [15] \mu_{ab}^{ii}\mu_{cd}^{ii}\mu_{ef}^{ii}\mu_{gh}^{jk} + [90] \mu_{ab}^{ii}\mu_{cd}^{ii} \mu_{eg}^{ij} \mu_{fg}^{ik} \\
M_{i^5jjk} &=& [30] \mu_{ab}^{ii}\mu_{cd}^{ii} \mu_{ef}^{ij} \mu_{gh}^{jk} + [60]\mu_{af}^{ij} \mu_{bc}^{ii}\mu_{dh}^{ik} + [15] \mu_{ab}^{ii}\mu_{cd}^{ii}\mu_{fg}^{jj} \mu_{eh}^{ik}\\
M_{i^4jjjk} &=& [9] \mu_{ab}^{ii}\mu_{cd}^{ii} \mu_{ef}^{jj}\mu_{gh}^{jk} + [36] \mu_{ef}^{jj} \mu_{ag}^{ij} \mu_{bh}^{ik} \mu_{cd}^{ii}\nonumber\\&&+[24] \mu_{ae}^{ij}\mu_{bf}^{ij}\mu_{cg}^{ij} \mu_{dh}^{ik} + [36] \mu_{ab}^{ii} \mu_{ce}^{ij}\mu_{df}^{ij} \mu_{gh}^{jk} \\
M_{i^4jjkk} &=& [3] \mu_{ab}^{ii}\mu_{cd}^{ii} \mu_{ef}^{jj} \mu_{gh}^{kk} + [6] \mu_{ab}^{ii}\mu_{cd}^{ii} \mu_{eg}^{jk}\mu_{fh}^{jk} + [12] \mu_{ab}^{ii} \mu_{ce}^{ij}\mu_{df}^{ij} \mu_{gh}^{kk}\nonumber \\ &&+ [24] \mu_{ag}^{ik}\mu_{bg}^{ik} \mu_{ce}^{ij}\mu_{df}^{ij} + [48] \mu_{ae}^{ij} \mu_{fg}^{jk} \mu_{bh}^{ik} \mu_{cd}^{ii} \nonumber\\&&+ [12] \mu_{ab}^{ii}\mu_{bg}^{ik}\mu_{ch}^{ik}\mu_{ef}^{jj}\\
M_{iiijjjkk} &=& [9] \mu_{ab}^{ii} \mu_{cd}^{ij} \mu_{ef}^{jj} \mu_{gh}^{kk} + [18] \mu_{ab}^{ii} \mu_{cd}^{ij} \mu_{eg}^{jk}\mu_{fh}^{jk} + [6] \mu_{ad}^{ij}\mu_{be}^{ij}\mu_{cf}^{ij} \mu_{gh}^{kk} \nonumber \\&& + [18] \mu_{ag}^{ik}\mu_{bh}^{ik} \mu_{cd}^{ij} \mu_{ef}^{jj} + [36] \mu_{ad}^{ij}\mu_{be}^{ij} \mu_{cg}^{ik} \mu_{fh}^{jk} \nonumber\\&& + [18] \mu_{ag}^{ik} \mu_{dh}^{jk} \mu_{bc}^{ii} \mu_{ef}^{jj}
\end{eqnarray}
\noindent\textbf{Order 10, D=2.}
\begin{eqnarray}
M_{i^{9}j} &=& [945] \mu_{ab}^{ii}\mu_{cd}^{ii}\mu_{ef}^{ii}\mu_{gh}^{ii} \mu_{m\ell}^{ij} \\
M_{i^{8}jj}&=&[105] \mu_{ab}^{ii}\mu_{cd}^{ii}\mu_{ef}^{ii}\mu_{gh}^{ii} \mu_{m\ell}^{jj} + [840] \mu_{ab}^{ii}\mu_{cd}^{ii}\mu_{ef}^{ii} \mu_{gm}^{ij}\mu_{h\ell}^{ij} \\
M_{i^{7}jjj}&=& [315] \mu_{ab}^{ii}\mu_{cd}^{ii}\mu_{ef}^{ii} \mu_{gh}^{ij} \mu_{m\ell}^{jj}+ [630] \mu_{ah}^{ij}\mu_{bm}^{ij}\mu_{c\ell}^{ij} \mu_{de}^{ii}\mu_{fg}^{ii} \\
M_{i^{6}j^{4}} &=& [45] \mu_{ab}^{ii}\mu_{cd}^{ii}\mu_{ef}^{ii} \mu_{gh}^{jj}\mu_{m\ell}^{jj} + [360] \mu_{ag}^{ij}\mu_{bh}^{ij}\mu_{cm}^{ij}\mu_{d\ell}^{ij} \mu_{ef}^{ii} \nonumber\\&&+ [540] \mu_{ab}^{ii}\mu_{cd}^{ii} \mu_{eg}^{ij}\mu_{fh}^{ij} \mu_{ml}^{jj}\\
M_{i^{5}j^{5}} &=& [120] \mu_{af}^{ij}\mu_{bg}^{ij}\mu_{ch}^{ij}\mu_{dm}^{ij}\mu_{e\ell}^{ij} \nonumber\\&&+ [225] \mu_{ab}^{ii}\mu_{cd}^{ii} \mu_{ef}^{ij} \mu_{gh}^{jj}\mu_{m\ell}^{jj} + [600] \mu_{fg}^{jj} \mu_{ah}^{ij}\mu_{bm}^{ij}\mu_{c\ell}^{ij}\mu_{de}^{ii} 
\end{eqnarray}

\noindent\textbf{Order 10, D=3.}
\begin{eqnarray}
M_{i^{8}jk}&=& [105] \mu_{ab}^{ii}\mu_{cd}^{ii}\mu_{ef}^{ii}\mu_{gh}^{ii} \mu_{m\ell}^{jk} + [840] \mu_{ab}^{ii}\mu_{cd}^{ii}\mu_{ef}^{ii} \mu_{gm}^{ij} \mu_{h\ell}^{ik}\\
M_{i^{7}jjk}&=&  [210] \mu_{ab}^{ii}\mu_{cd}^{ii}\mu_{ef}^{ii}\mu_{gh}^{ij} \mu_{m\ell}^{jk} + [630] \mu_{ab}^{ii}\mu_{cd}^{ii} \mu_{eh}^{ij}\mu_{fm}^{ij} \mu_{h\ell}^{ik} \nonumber\\&&+[105] \mu_{ab}^{ii}\mu_{cd}^{ii}\mu_{ef}^{ii} \mu_{hm}^{jj}\mu_{g\ell}^{ik} \\
M_{i^{6}jjjk}&=&  [45] \mu_{ab}^{ii}\mu_{cd}^{ii}\mu_{ef}^{ii} \mu_{gh}^{jj} \mu_{m\ell}^{jk} + [270] \mu_{ab}^{ii}\mu_{cd}^{ii} \mu_{eg}^{ij}\mu_{fh}^{ij} \mu_{m\ell}^{jk} \nonumber\\&&+ [360] \mu_{ag}^{ij}\mu_{bh}^{ij}\mu_{cm}^{ij} \mu_{d\ell}^{ik} \mu_{ef}^{ii} + [270] \mu_{a\ell}^{ik} \mu_{gh}^{jj} \mu_{b\ell}^{ij} \mu_{cd}^{ii}\mu_{ef}^{ii} \\
M_{i^{6}jjkk}&=& [15] \mu_{ab}^{ii}\mu_{cd}^{ii}\mu_{ef}^{ii} \mu_{gh}{jj} \mu_{m\ell}^{kk} + [30] \mu_{ab}^{ii}\mu_{cd}^{ii}\mu_{ef}^{ii} \mu_{gm}^{jk}\mu_{h\ell}^{jk} \nonumber\\&&+ [90] \mu_{ab}^{ii}\mu_{cd}^{ii} \mu_{eg}^{ij}\mu_{fh}^{ij} \mu_{m\ell}^{kk} + [90] \mu_{ab}^{ii}\mu_{cd}^{ii} \mu_{em}^{ik}\mu_{f\ell}^{ik} \mu_{gh}^{jj} \nonumber\\&&+ [360] \mu_{ab}^{ii} \mu_{cg}^{ij}\mu_{dh}^{ij} \mu_{em}^{ik}\mu_{f\ell}^{ik} + [360] \mu_{ab}^{ii}\mu_{cd}^{ii}\mu_{eg}^{ij} \mu_{fm}^{ik} \mu_{h\ell}^{jk} \\
M_{i^{5}j^{4}k}&=&  [45] \mu_{ab}^{ii}\mu_{cd}^{ii} \mu_{fg}^{jj}\mu_{hm}^{jj}\mu_{e\ell}^{ik} + [360] \mu_{ab}^{ii} \mu_{cf}^{ij}\mu_{dg} \mu_{hm}^{jj} \mu_{e\ell}^{ik} \nonumber\\&&+ [120] \mu_{ag}^{ij}\mu_{bf}^{ij}\mu_{cg}^{ij}\mu_{dh}^{ij}\mu_{e\ell}^{ik} + [180] \mu_{ab}^{ii}\mu_{cd}^{ii} \mu_{ef}^{ij} \mu_{gh}^{jj} \mu_{m\ell}^{jk} \nonumber \\&&+ [240] \mu_{ab}^{ii} \mu_{cf}^{ij}\mu_{dg}^{ij}\mu_{eh}^{ij} \mu_{m\ell}^{jk} \\
M_{i^{5}jjjkk}&=& [45] \mu_{ab}^{ii}\mu_{cd}^{ii} \mu_{ef}^{ij} \mu_{gh}^{jj} \mu_{m\ell}{kk} + [60] \mu_{ab}^{ii}\mu_{cf}^{ij}\mu_{dg}^{ij}\mu_{eh}^{ij} \mu_{m\ell}^{kk} \nonumber\\&&+ [90] \mu_{ab}^{ii}\mu_{cd}^{ii}\mu_{ef}^{ij} \mu_{gm}^{jk}\mu_{h\ell}^{jk} + 360 \mu_{ab}^{ii} \mu_{cf}^{ij}\mu_{dg}^{ij}\mu_{em}^{ik} \mu_{h\ell}^{jk} \nonumber\\&&+ [90] \mu_{ab}^{ii} \mu_{fg}^{jj}\mu_{cm}^{ik}\mu_{h\ell}^{jk} + [180] \mu_{ab}^{ii} \mu_{cf}^{ij} \mu_{gh}^{jj} \mu_{dm}^{ik}\mu_{e\ell}^{ik}\nonumber\\&& + [120] \mu_{af}^{ij}\mu_{bg}^{ij}\mu_{ch}^{ij} \mu_{dm}^{ik}\mu_{e\ell}^{ik}\\
M_{i^{4}jjjkkk}&=& [27] \mu_{ab}^{ii}\mu_{cd}^{ii} \mu_{ef}^{jj} \mu_{gh}^{jk} \mu_{m\ell}^{kk} + [18] \mu_{ab}^{ii}\mu_{cd}^{ii}\mu_{eh}^{jk}\mu_{fm}^{jk}\mu_{g\ell}^{ij} \nonumber\\&&+ [108] \mu_{ab}^{ii} \mu_{ce}^{ij}\mu_{df}^{ij} \mu_{gh}^{jk}\mu_{m\ell}^{kk} + [108] \mu_{ab}^{ii}\mu_{ef}^{jj} \mu_{gh}^{jk}\mu_{cm}^{ik}\mu_{d\ell}^{ik} \nonumber \\&&+ [108] \mu_{ab}^{ii} \mu_{ef}^{jj} \mu_{cg}^{ij} \mu_{dh}^{ik} \mu_{m\ell}^{kk} + [216] \mu_{ab}^{ii} \mu_{ce}^{ij} \mu_{dh}^{ik} \mu_{fm}^{jk}\mu_{g\ell}^{jk}  \nonumber \\&&+ [72] \mu_{ae}^{ij}\mu_{bf}^{ij}\mu_{cg}^{ij} \mu_{dh}^{ik} \mu_{m\ell}^{kk} + [216] \mu_{ah}^{ik}\mu_{bm}^{ik} \mu_{ce}^{ij}\mu_{df}^{ij}\mu_{g\ell}^{jk}\nonumber\\&& + [72] \mu_{ah}^{ik}\mu_{bm}^{ik}\mu_{c\ell}^{ik} \mu_{de}^{ij} \mu_{fg}^{jj}\\
M_{i^{4}j^{4}kk}&=& [9]\mu_{ab}^{ii}\mu_{cd}^{ii} \mu_{ef}^{jj}\mu_{gh}^{jj} \mu_{m\ell}^{kk} + [72] \mu_{ab}^{ii} \mu_{ce}^{ij}\mu_{df}^{ij} \mu_{gh}^{jj} \mu_{m\ell}^{kk} \nonumber\\&&+ [24] \mu_{ae}^{ij}\mu_{bf}^{ij}\mu_{cg}^{ij}\mu_{dh}^{ij} \mu_{m\ell}^{kk} + [36] \mu_{ab}^{ii} \mu_{ef}^{jj} \mu_{gm}^{jk}\mu_{h\ell}^{jk}  \nonumber \\&& + [144] \mu_{ab}^{ii} \mu_{ce}^{ij}\mu_{df}^{ij} \mu_{gm}^{jk}\mu_{h\ell}^{jk} + [36] \mu_{ab}^{ii} \mu_{ef}^{jj}\mu_{gh}^{jj} \mu_{cm}^{ik}\mu_{d\ell}^{ik} \nonumber\\&& + [144] \mu_{ae}^{ij}\mu_{bf}^{ij} \mu_{gh}^{jj} \mu_{cm}^{ik}\mu_{d\ell}^{ik} + [288] \mu_{ab}^{ii} \mu_{ef}^{jj} \mu_{cg}^{ij} \mu_{dm}^{ik} \mu_{h\ell}^{jk}  \nonumber \\&&+ [192] \mu_{ae}^{ij}\mu_{bf}^{ij}\mu_{cg}^{ij}\mu_{dm}^{ik}\mu_{h\ell}^{jk}
\end{eqnarray}

\bigskip

\subsection{Particular results when $p=1$}
Here we remind that $\mu_{11}^{ij} = S_{ij}$.

\noindent\textbf{Order 12, p=1, D=2.}
\begin{eqnarray}
M_{i^{11}j}&=& 10395 S_{ii}^{5}S_{ij} +  9450 S_{ii}^4 S_{ij}^2 \\ 
M_{i^{9}jjj}&=& 2835 S_{ii}^4 S_{ij} S_{jj}+ 7560 S_{ii}^3 S_{ij}^3 \\
M_{i^{8}j{4}} &=& 5040 S_{ij}^{4} S_{ii}^2 + 315 S_{ii}^4 S_{jj}^2 + 5040 S_{ii}^3 S_{ij}^2 S_{jj} \\
M_{i^{7}j^{5}} &=& 1575 S_{ii}^{3} S_{ij} S_{jj}^2 + 6300 S_{ii}^2 S_{ij}^3 S_{jj} + 2520 S_{ii} S_{ij}^5 \\
M_{i^{6}j^{6}} &=& 720 S_{ij}^6 + 225 S_{ii}^{3} S_{jj}^3 + 5400 S_{ii} S_{ij}^4 S_{jj} + 4050 S_{ii}^2 S_{ij}^2 S_{jj}^2
\end{eqnarray}
\noindent\textbf{Order 12, p=1, D=3.}
\begin{eqnarray}
M_{i^{10}jk}&=& 945 S_{ii}^5 S_{jk} + 9450 S_{ik} S_{ij} S_{ii}^4\\
M_{ijjk}&=& 945 S_{ii}^4 S_{jj} S_{ik} + 7560 S_{ii}^3 S_{ij}^2 S_{ik} + 1890 S_{ii}^4 S_{ij} S_{jk}\\
M_{i^{8}jjjk}&=& 315 S_{ii}^4 S_{jj} S_{jk} + 2520 S_{ii}^3 S_{ij} S_{jj} S_{ik} + 2520 S_{ii}^3 S_{ij}^2 S_{jk} \nonumber\\&&+ 5040 S_{ii}^2 S_{ij}^3 S_{ik}\\
M_{i^{7}j^{4}k}&=& 315 S_{ii}^3 S_{jj}^2 S_{ik} + 3780 S_{ii}^2 S_{ij}^2 S_{ik} + 1260 S_{ii} S_{ij}^4 S_{ik} + 1260 S_{ii}^3 S_{ij} S_{jj} S_{jk}\nonumber\\&& + 3780 S_{ii}^2 S_{ij}^3 S_{jk}\\
M_{i^{8}jjkk} &=& 105 S_{ii}^4 S_{jj} S_{kk} + 210 S_{ii}^4 S_{jk}^2 + 840 S_{ii}^3 S_{ij}^2 S_{kk} + 840 S_{ii}^3 S_{ik}^2 S_{jj}\nonumber\\&&+ 5040 S_{ii}^2 S_{ij}^2 S_{ik}^2 + 3360 S_{ii}^3 S_{ik} S_{ij} S_{jk}
\end{eqnarray}
\noindent\textbf{Order 12, p=1, D=4.}
\begin{eqnarray}
M_{i^{4}j^{4}kk\ell \ell}&=&3 S_{ii}^{2}[3 S_{jj}^{2} S_{kk}S_{\ell \ell} + 6 S_{jj}^2 S_{k\ell}^{2} + 12 S_{jj}S_{jk}^{2}S_{\ell \ell} + 24 S_{j\ell}^{2}S_{jk}^{2}\nonumber\\&& + 48 S_{jk} S_{k\ell} S_{j\ell} S_{jj} + 12 S_{jj} S_{j\ell}^{2} S_{kk}] + 3 S_{jj}^{2}[12 S_{ii}S_{ik}^{2}S_{\ell \ell} \nonumber\\&&+ 24 S_{il}^{2}S_{ik}^2 + 48 S_{ik}S_{k\ell}S_{i\ell}S_{ii} + 12 S_{ii}S_{i\ell}^{2} S_{kk}]\nonumber \\&& + 24 S_{ij}^{4} S_{kk} S_{\ell \ell} + 48 S_{ij}^{4} S_{k\ell}^{2} + 96 S_{ij}^{3}[2 S_{ik}S_{jk}S_{\ell \ell} + 2 S_{i\ell} S_{j\ell} S_{kk} \nonumber \\ && + 4 S_{ik} S_{j\ell}S_{k\ell} + 4 S_{i\ell} S_{jk} S_{\ell k}] + 72 S_{ij}
       ^{2}[ 4 S_{ik}^{2} S_{j\ell}^{2} + 4 S_{jk}^{2} S_{i\ell}^{2} \nonumber\\&&+ 16 S_{ik} S_{i\ell} S_{jk} S_{j\ell} + S_{ii}S_{jj}S_{kk}S_{\ell \ell} + 2 S_{ii} S_{jj} S_{k\ell}^{2}] + 12 S_{ik}^{2}[12 S_{ji}^{2}\nonumber \\&& \times S_{jj} S_{\ell \ell} + 48 S_{ij} S_{i\ell} S_{j\ell} S_{jj} + 12 S_{jj}S_{j\ell}^{2}S_{ii}+ 12 S_{i\ell}^{2} [12 S_{ji}^{2} S_{jj} S_{kk} \nonumber\\&& + 48 S_{ij}S_{ik}S_{jk}S_{jj} + 12 S_{jj}S_{jk}^{2}S_{ii}] + 12 S_{j\ell}^{2} [12 S_{ij}^{2} S_{ii} S_{kk} \nonumber\\&& + 48 S_{ij} S_{jk} S_{ik}S_{ii}] + 12 S_{jk}^{2} [12 S_{ij}^{2} S_{ii} S_{\ell \ell}  + 48 S_{ij} S_{j\ell} S_{i\ell} S_{ii}] \nonumber\\&&+ 576 S_{ii}[S_{ik}S_{i\ell}S_{jj}S_{jk}S_{j\ell} +S_{ik}S_{ij}S_{jj}S_{j\ell}S_{k\ell}\nonumber \\&&+ S_{ik}S_{ij}S_{jj}S_{jk}S_{\ell \ell} + S_{il}S_{ij}S_{jj}S_{jk}S_{\ell k} \nonumber \\ && + S_{i\ell}S_{ij}S_{jj}S_{j\ell}S_{kk}+S_{ik}S_{ij}S_{jj}S_{\ell k}S_{j\ell}]
\end{eqnarray}
\FIN
\subsection{Computation of the mean of $\hat{B}_p(N)$}\label{app:mean}
The first step is to unfold McCullagh's bracket notation to have the explicit summation terms. For instance: 
\begin{eqnarray}
\E{A_{nn}^2} &=& \sum_{a,b=1}^{p}\sum_{c,d=1}^{p} G_{ab}G_{cd}(S_{ab}S_{cd}+ S_{ac}S_{bd} + S_{ad}S_{bc}) 
\end{eqnarray}

\medskip
\textbf{For p=1.}
\begin{eqnarray}
\E{A_{nn}^{2}} &=& 3\\
    \E{A_{nn}A_{ni}^{2}} &=& 3 + 12 \frac{S(n-i)^{2}}{S^{2}} \\
    \E{A_{ni}^{2}A_{nj}^{2}} &=& 3 + 6\frac{S(i-j)^{2}}{S^2}+12\frac{S(n-i)^{2}}{S^2}+12\frac{S(n-j)^{2}}{S^2} \nonumber\\&&+ 24 \frac{S(n-i)^{2}S_{11}(n-j)^{2}}{S^4} + 48 \frac{S(n-i)S(n-j)S(i-j)}{S^3} \nonumber \\
    \E{A_{nn}A_{nj}A_{jk}A_{kn}} &=& 3 + 6\frac{S(j-k)^{2}}{S^2} + 12 \frac{S(n-k)^{2}}{S^2} + 12 \frac{S(n-j)^{2}}{S^2} \nonumber \\&&+ 24 \frac{S(n-k)^{2}S(n-j)^{2}}{S^4} + 48 \frac{S(n-k)S(j-k)S(n-j)}{S^3} \nonumber\\
\end{eqnarray} 
The exact computation of $\E{\hat{B}_{1}}$ yields the following result:
\begin{eqnarray}
    \E{\hat{B}_{1}} &=& 3 - \frac{6}{N^2}\sum_{n,i}\frac{S(n-i)^2}{S^2} + \frac{72}{N^3} \sum_{n,i,j} \frac{S(n-j)^2 S(n-i)^2}{S^4} \nonumber \\&& +\frac{144}{N^3} \sum_{n,i,j} \frac{S(n-i)S(i-j)S(n-j)}{S^3}
\end{eqnarray}
Based on the results in \cite[p.~346-347]{Cram46}, it can be shown that $\frac{1}{N^3} \sum_{n,i,j} \frac{S(n-j)^2 S(n-i)^2}{S^4}$ and $\frac{1}{N^3} \sum_{n,i,j} \frac{S(n-i) S(i-j) S(n-j)}{S^3}$ will contribute quantities of order lower than $N^{-1}$.

\medskip
\textbf{For p=2.}
\begin{eqnarray}
\E{A_{nn}^{2}} &=& 8 \\
\E{A_{nn}A_{ni}^{2}}&=& 8 + \frac{1}{(S_{11}S_{22}-S_{12}^{2})^{2}}\Big[S_{11}S_{22}\big[2S_{12}^{2}(n-i)+2S_{12}^{2}(n-i) \nonumber \\&&+12S_{21}(n-i)S_{12}(n-i)+ 12S_{22}(n-i)S_{11}(n-i)\big]\nonumber\\&&+ S_{12}^{2}\big[ 12 S_{12}^{2}(n-i)+12S_{21}(n-i)S_{12}(n-i)\nonumber\\&&+12S_{21}^{2}(n-i)+16S_{11}(n-i)S_{22}(n-i)\big] \nonumber\\&&-28S_{12}S_{11}\big[S_{22}(n-i)S_{12}(n-i)+S_{22}(n-i)S_{12}(n-i)\big]\nonumber\\&&-28S_{12}S_{22}\big[S_{11}(n-i)S_{12}(n-i)+S_{11}(n-i)S_{12}(n-i)\big]\nonumber\\&&+ 14S_{11}^{2}S_{22}^{2}(n-i)+14S_{22}^{2}S_{11}^{2}(n-i)\Big] \end{eqnarray}

\textit{Bivariate embedding.}
\begin{eqnarray}
\E{A_{nn}A_{ni}^2} &=& 8 + \frac{1}{(C_{0}^{2}-C_{1}^{2})^2} \times [C_{0}^2\times [2(\gamma_{1}(n-i) + \gamma_{-1}(n-i))^2 \nonumber\\&&+ 8 \gamma_{-1}(n-i) \gamma_{1}(n-i)  + 40 \gamma_{0}(n-i)^2] + C_{1}^2\nonumber\\&&\times [12 (\gamma_{1}(n-i) + \gamma_{-1}(n-i))^2 -8\gamma_{-1}(n-i)\gamma_{1}(n-i) + 16 \gamma_{0}(n-i)^{2}]\nonumber\\&& - C_{0}C_{1} \times [56 \gamma_{0}(n-i) ( \gamma_{1}(n-i) + \gamma_{-1}(n-i) )]]\\
\E{A_{ni}^{2}A_{nj}^{2}} &=& 8 + \frac{1}{(C_{0}^{2}-C_{1}^{2})^2} \times [C_{0}^2\times [2 (\gamma_{1}(i-j) + \gamma_{-1}(i-j))^2\nonumber\\&&+ 2 (\gamma_{1}(n-i) + \gamma_{-1}(n-i))^2 + 2 (\gamma_{1}(n-j) + \gamma_{-1}(n-j))^2 \nonumber\\&&+ 8 \gamma_{-1}(n-i) \gamma_{1}(n-i) +  8 \gamma_{-1}(n-j) \gamma_{1}(n-j)  + 16 \gamma_{0}(i-j)^2 \nonumber\\&&+ 40 \gamma_{0}(n-j)^2+ 40 \gamma_{0}(n-i)^2] + C_{1}^2\times [4 (\gamma_{1}(i-j) + \gamma_{-1}(i-j))^2 \nonumber\\&&+ 12 (\gamma_{1}(n-j) + \gamma_{-1}(n-j))^2 + 12 (\gamma_{1}(n-i) + \gamma_{-1}(n-i))^2 \nonumber\\&&- 8 \gamma_{-1}(n-i)\gamma_{1}(n-i) - 8 \gamma_{-1}(n-j)\gamma_{1}(n-j) + 8 \gamma_{0}(i-j)
^2\nonumber\\&&+ 16 \gamma_{0}(n-j)^2 +16 \gamma_{0}(n-i)^2] - C_{0}C_{1} \times [24 \gamma_{0}(i-j)\nonumber\\&&\times(\gamma_{1}(i-j)+\gamma_{-1}(i-j)) + 56 \gamma_{0}(n-i)(\gamma_{1}(n-i)+\gamma_{-1}(n-i)) \nonumber\\&&+ 56 \gamma_{0}(n-j)(\gamma_{1}(n-j)+\gamma_{-1}(n-j)) ]
\end{eqnarray}

Following the same pattern as the mean, but with more moments involved, the computation of the variance may be conducted [\cite{Elbo21}]. 

\subsection{Sklar's theorem}\label{Sklar}
\begin{theorem}(Sklar's theorem 1959)
\begin{eqnarray}
F_{X_1,X_2}(x_1,x_2) = Pr(X_1\leq x_1, X_2 \leq x_2) = \mathcal{C}(F(x_1),G(x_2))
\end{eqnarray}
where $F_{X_1,X_2}$ is the joint cumulative distribution function (cdf) of $(X_1,X_2)$, and $F$ (resp. $G$) is the  cdf of $X_1$ (resp. $X_2$). 
If $F$, $G$ are continuous, then $\mathcal{C}$ is unique, and is defined by:
\begin{equation}
    \mathcal{C}(u_1,u_2) = F_{X_1,X_2}(F^{-1}(u_1),G^{-1}(u_2)).
\end{equation}
\end{theorem}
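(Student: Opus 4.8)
The plan is to treat the continuous-marginal case directly, since this is all the paper requires (the Gaussian marginals under study are continuous), and to obtain both the factorization and the uniqueness from the probability integral transform. First I would invoke that transform: for a continuous cumulative distribution function $F$, the variable $U_1=F(X_1)$ is uniform on $[0,1]$, because $\Pr(F(X_1)\le u)=u$ for every $u\in[0,1]$; likewise $U_2=G(X_2)$ is uniform on $[0,1]$. I would then \emph{define} the candidate copula as the joint distribution function of the transformed pair, $\mathcal{C}(u_1,u_2)=\Pr(F(X_1)\le u_1,\,G(X_2)\le u_2)$. By construction this is a bivariate distribution function whose margins are uniform, hence a genuine copula.

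Next I would derive the explicit formula and then the factorization. Using the quantile functions $F^{-1}$ and $G^{-1}$, continuity of $F$ and $G$ lets one replace the event $\{F(X_1)\le u_1\}$ by $\{X_1\le F^{-1}(u_1)\}$ (and similarly for the second margin), which turns the definition of $\mathcal{C}$ into $\mathcal{C}(u_1,u_2)=F_{X_1,X_2}(F^{-1}(u_1),G^{-1}(u_2))$, the asserted closed form. Substituting $u_1=F(x_1)$ and $u_2=G(x_2)$, continuity gives $\{F(X_1)\le F(x_1)\}=\{X_1\le x_1\}$ up to a null event, so that $\mathcal{C}(F(x_1),G(x_2))=\Pr(X_1\le x_1,X_2\le x_2)=F_{X_1,X_2}(x_1,x_2)$, which is the claimed factorization. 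For uniqueness I would note that when $F$ and $G$ are continuous their ranges fill $(0,1)$, so the relation $\mathcal{C}(F(x_1),G(x_2))=F_{X_1,X_2}(x_1,x_2)$ determines $\mathcal{C}$ on a dense subset of $[0,1]^2$; since any copula is Lipschitz, and in particular continuous, this fixes $\mathcal{C}$ everywhere.

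The step requiring the most care is the quantile substitution, where one must handle the generalized inverse $F^{-1}(u)=\inf\{x:F(x)\ge u\}$ and the Galois relation $F^{-1}(u)\le x\iff u\le F(x)$ correctly; continuity is exactly what makes the two event descriptions agree up to null sets. The genuine obstacle, however, lies outside the regime used here: if $F$ or $G$ has atoms, the probability integral transform fails and the range of the margin is no longer dense, so the copula is no longer pinned down by the factorization and must instead be constructed by interpolating over the closure of the range (a distributional-transform argument). This is where essentially all the difficulty of the fully general Sklar theorem resides, but it does not intervene in the continuous setting relevant to this paper.
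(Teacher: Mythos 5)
The paper does not prove this statement: it is quoted verbatim as a classical result (Sklar, 1959) in Appendix~\ref{Sklar} and used only to justify that the Gaussian copula is the unique copula producing a bivariate Gaussian law, so there is no in-paper argument to compare yours against. Your proof of the continuous-marginal case is correct and is the standard route (probability integral transform): defining $\mathcal{C}(u_1,u_2)=\Pr\bigl(F(X_1)\le u_1,\,G(X_2)\le u_2\bigr)$ does give a joint cdf with uniform margins, the Galois relation $F^{-1}(u)\le x\iff u\le F(x)$ together with $\Pr(F(X_1)=u)=0$ handles the quantile substitution up to null events, and uniqueness follows because $(F(x_1),G(x_2))$ ranges over $(0,1)^2$, which is dense in $[0,1]^2$, while every copula is $1$-Lipschitz in each argument. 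Since the paper only ever applies the theorem to continuous (Gaussian) marginals, restricting to that case loses nothing here; your closing remark correctly locates where the general (atomic) case becomes genuinely harder. One small point worth making explicit if you write this up: the uniform marginality of $F(X_1)$, i.e.\ $\Pr(F(X_1)\le u)=u$, is itself proved by the same Galois/continuity argument, so it is cleaner to establish that identity once and then reuse it for both the marginal and the joint statements.
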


\end{document}